\documentclass[conference,twocolumn]{IEEEtran}
\usepackage[ruled,vlined]{algorithm2e}
\usepackage{amsfonts}
\usepackage{amssymb}

\usepackage{cancel}
\usepackage{ClearSans}

\usepackage{epsfig}

\usepackage{glossaries}

\usepackage{ifthen}

\usepackage{mathrsfs}
\usepackage{mathtools}

\usepackage{scalerel}
\usepackage{subcaption}

\usepackage{tikz,pgfplots}
\pgfplotsset{compat=newest}

\usepackage[normalem]{ulem}



\usepackage{cite}

\usepackage{enumitem}
\usepackage[draft]{hyperref}

\usepackage{dsfont}

\newcommand{\diag}{\ensuremath{\text{diag} }}

\newcommand{\Exp}[2][]{\ensuremath{\mathbb{E}_{#1}\left[#2 \right]}} 

\newcommand{\eqann}[2]{\overset{\mathclap{(\text{#2})}}{#1}} 
\newcommand{\eqannref}[1]{$(\text{#1})$}
\newcommand{\bv}[1]{\mathbf{#1}} 
\newcommand{\rv}[1]{\mathsf{#1}} 

\newcommand{\minus}{\scalebox{0.75}[1.0]{\( - \)}}

\newcommand{\pmf}[1]{\ensuremath{\mathsf{P}_{#1}}}



\newcommand{\set}[1]{\mathcal{#1}} 


%

\newtheorem{theorem}{Theorem}

\newtheorem{remark}{Remark}

\newtheorem{lemma}{Lemma}

\newenvironment{proof}[1][Proof]{\noindent\textbf{#1.} }{\ \rule{0.5em}{0.5em}}

\newacronym{awgn}{AWGN}{\textit{additive white Gaussian noise}}

\newacronym{bac}{BAC}{Binary Asymmetric Channel}

\newacronym{bdsib}{BDSIB}{Binary Double-Sided Information-Bottleneck}
\newacronym{bec}{BEC}{\textit{binary erasure channel}}

\newacronym{bms}{BMS}{Binary Memoryless Symmetric}

\newacronym{bs}{BS}{base station}

\newacronym{bsc}{BSC}{\textit{binary symmetric channel}}
\newacronym{bscs}{BSCs}{\textit{binary symmetric channels}}

\newacronym{ceb}{CEB}{\textit{conditional entropy bound}}

\newacronym{comib}{COMIB}{\textit{compound information bottleneck}}

\newacronym{cr}{CR}{\textit{common reconstruction}}

\newacronym{cran}{C-RAN}{cloud radio access network}

\newacronym{cp}{CP}{Central Proccesor}

\newacronym{dmc}{DMC}{\textit{discrete memoryless channel}}

\newacronym{dmmac}{DM-MAC}{Discrete Memoryless Multiple Access Channel}

\newacronym{dms}{DMS}{Discrete Memoryless Source}

\newacronym{dpi}{DPI}{Data Proccesing Inequality}

\newacronym{dsbs}{DSBS}{\textit{doubly symmetric binary source}}

\newacronym{dsib}{DSIB}{Double-Sided Information-Bottleneck}

\newacronym{epi}{EPI}{Entropy Power Inequality}

\newacronym{gdsib}{GDSIB}{Gaussian Double-Sided Information Bottleneck}

\newacronym{gp}{GP}{Gelf'and-Pinsker}

\newacronym{ib}{IB}{\textit{information bottleneck}}
\newacronym{iid}{i.i.d.}{independent and identically distributed}

\newacronym{infcomb}{IC}{\textit{information combining}}

\newacronym{kld}{KLD}{\textit{Kullback–Leibler divergence}}

\newacronym{lhs}{LHS}{Left Hand Side}

\newacronym{sgdsib}{SGDSIB}{Scalar Gaussian Double-Sided Information-Bottleneck}

\newacronym{mac}{MAC}{multiple access channel}

\newacronym{mgl}{MGL}{Mrs. Gerber's lemma}

\newacronym{mi}{MI}{mutual information}

\newacronym{mmse}{MMSE}{minimum mean squared error}

\newacronym{mu}{MU}{Mobile User}

\newacronym{oblib}{OBLIB}{Oblivious Information Bottleneck}

\newacronym{pf}{PF}{\textit{Privacy Funnel}}

\newacronym{pmf}{PMF}{probability mass function}

\newacronym{ssib}{SSIB}{Single-Sided Information Bottleneck}
\newacronym{sgssib}{SGSSIB}{Scalar Gaussian Single-Sided Information Bottleneck function}
\newacronym{sawgnssib}{SAWGNSSIB}{Scalar AWGN Single-Sided Information Bottleneck function}

\newacronym{snr}{SNR}{Signal-to-Noise Ratio}

\newacronym{svd}{SVD}{Singular Value Decomposition}

\newacronym{tibo}{TIBO}{Ternary-Input Binary-Output }

\newacronym{tito}{TITO}{Ternary-Input Ternary-Output }

\newacronym{qiqo}{QIQO}{Quaternary-Input Quaternary-Output }

\newacronym{rhs}{RHS}{right hand side}

\newacronym{rv}{RV}{Random Variable}

\newacronym{rssib}{RSSIB}{Reversed Single-Sided Information Bottleneck function}
\newacronym{rsawgnssib}{RSAWGNSSIB}{Reversed Scalar AWGN Single-Sided Information Bottleneck function}

\newacronym{tv}{TV}{\textit{Total Variation}}

\newacronym{wlog}{WLOG}{long}

\newacronym{wtc}{WTC}{Wiretap Channel}
\newacronym{psd}{PSD}{positive semi-definite}
\newacronym{wss}{WSS}{wide sense stationary}
\newacronym{dpcm}{DPCM}{\textit{differential pulse-code modulation}}

\newacronym{isi}{ISI}{inter-symbol interference}
\newacronym{vq}{VQ}{Vector Quantizer}

\usepackage{setspace}
\PassOptionsToPackage{normalem}{ulem}
\usepackage{ulem}

\newboolean{fullver}
\setboolean{fullver}{true} 

\title{On Information Bottleneck for Gaussian Processes}

\author{
	\IEEEauthorblockN{Michael Dikshtein, Nir Weinberger,  and Shlomo Shamai (Shitz)}
	
	\IEEEauthorblockA{Department of Electrical and Computer Engineering, Technion, Haifa 3200003, Israel}
	
	\IEEEauthorblockA{ 		
		Email: \{michaeldic@campus., nirwein@, sshlomo@ee.\}technion.ac.il
}
}	

%
%

\begin{document}
	\maketitle
	
	\begin{abstract}
	The \acrfull{ib} problem of jointly stationary Gaussian sources is considered. A water-filling solution for the \acrshort{ib} rate is given in terms of its \acrshort{snr} spectrum and whose rate is attained via frequency domain test-channel realization. A time-domain realization of the \acrshort{ib} rate, based on linear prediction, is also proposed, which lends itself to an efficient implementation of the corresponding remote source-coding problem. A compound version of the problem is addressed, in which the joint distribution of the source is not precisely specified but rather in terms of a lower bound on the guaranteed mutual information. It is proved that a white \acrshort{snr} spectrum is optimal for this setting.
	\end{abstract}
	
	\section{Introduction}	
	\begin{figure}[b]
	\centering
	\begin{tikzpicture}[thick,scale=0.87, every node/.style={scale=0.9}]
		\coordinate  (inX) at (0,0);
		\node[rectangle, draw = black, minimum width = 1.5cm, minimum height = 1.2cm, align = center]  (source) at (2,0) {Source \\ $ \pmf{\rv{X}  \rv{Y}} $};
		\node[rectangle, draw = black, minimum width = 1cm, minimum height = 1.2cm, text width = 1.75cm, align = center]  (enc) at (5,0) {Encoder};
		\node[rectangle, draw = black, minimum width = 1cm, minimum height = 1.2cm, text width = 1.75cm, align = center]  (dec) at (8,0) {Decoder};
		\node (out) at (9.5,0) {};

		\draw[->] (source)  --  node [above]  {$ \rv{X}^n $} (inX) ;
		\draw[->] (source) -- node[above] {$ \rv{Y}^n $}  (enc);
		\draw[->] (enc) --  node[above] {$ \rv{M}  $} (dec);
		\draw[->] (dec) -- (out) node[right] {$ \rv{Z}^n $};
		
		\end{tikzpicture}
	\caption{Block diagram of Remote Source Coding.}
	\label{figure:remote_source_coding_diagram}
\end{figure}
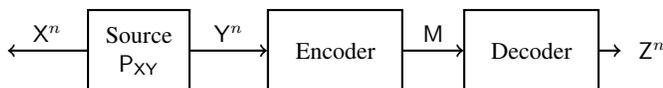

It is by now well established that the \acrfull{ib} method \cite{Tishby1999} plays a central role in information theory,  machine learning, and various other fields. In its basic form, it introduced an alternative conceptual approach to the problem of lossy compression. In classic rate-distortion problems \cite{Cover2006}, a distortion measure must be provided in order to calculate the respective rate-distortion function and characterize the single-letter compression strategy. However, in practice, determining the distortion measure may be a challenging problem on its own. One motivating example is speech compression, in which it is difficult to identify the prominent features of the vocal signal that affect the quality of the reconstructed audio. Also, one may argue that the transcript of the speech signal has more information than the features of the analog waveform. The \acrshort{ib} framework tackles those issues by introducing an additional, informative variable to the problem that acts as the appropriate labeling of the data to be compressed. Thus, \acrshort{ib} provides a natural fidelity measure in cases where such cannot be defined or does not exist; an essential feature for modern source coding problems. We also mention that, as is well known, the \acrshort{ib} is the information-theoretic solution to a remote source coding problem \cite{Dobrushin1962,Wolf1970}, when the  log-loss is chosen as the distortion measure. Nonetheless, the setting behind remote-source coding is fundamentally different from \acrshort{ib}, despite the formal similarity between the resulting  optimization problems. 

In general, the \acrshort{ib} function is a non-convex optimization problem and does not have a closed-form solution. However, it can be analytically solved in some canonical scenarios, mainly the \acrfull{dsbs} in \cite{Zaidi2020}, and a jointly Gaussian vector source in \cite{Chechik2005}. In all other cases, as proposed in \cite{Tishby1999}, a Blahut-Arimoto type alternating minimization algorithm \cite{Blahut1972,Arimoto1972}, is typically used for obtaining a solution. This algorithm was shown to converge to a local stationary point but has no global convergence guarantees.

This work addresses the bivariate Gaussian setting and replaces the finite length vectors formulation with that of a stationary Gaussian random process. Considering random processes instead of finite length vectors gives the problem a more natural and practical flavor and is motivated by the communication and signal-processing problems, in which the receiver sequentially processes the samples. It has also rooted in remote source coding and signal-denoising applications. This wide-sense stationary setting is usually approached via frequency-domain methods, leading to linear time-invariant system implementation. In this paper, we will also propose a prediction-based scheme that implements the compression phase using a time-domain single-letter, sequential processing.

\subsection{Problem Formulation}
Consider the discrete-time \acrshort{ib} model illustrated in \autoref{figure:remote_source_coding_diagram}. The real valued bivariate source
\begin{equation} \label{eq:bivariate_source}
  (\{\rv{X}_t\},\{\rv{Y}_t\}) = \dots, (\rv{X}_{-1},\rv{Y}_{-1}), (\rv{X}_{0},\rv{Y}_{0}), (\rv{X}_{1},\rv{Y}_{1}),  \dots  ,
\end{equation}
is a bivariate stationary Gaussian random process, with marginal power spectral densities $ S_{\rv{X}}(f)  $, $ S_{\rv{Y}} (f) $, and cross-power spectrum $ S_{\rv{X}\rv{Y}}(f) $. The encoder observes $\rv{Y}^n$ and maps it to a compressed representation with an index $\rv{M} \in [1:2^{nC}]$. The decoder recovers $\rv{Z}^n$ from $\rv{M}$. The compression strategy $\pmf{\rv{Z}|\rv{Y}}$ is optimized to maximize the normalized mutual information between $\rv{X}^n$ and $\rv{Z}^n$.

The considered jointly Gaussian bivariate stationary source can be equivalently represented by linear time invariant filters \cite[Thm. 4.5.5]{Berger1971}, i.e.,
	$\rv{Y}_n = h_n * \rv{X}_n + \rv{W}_n$;
where $*$ stands for convolution operator, $h_n$ is the impulse response of the linear system with transfer function
\begin{equation}
	H(f) = \frac{S_{\rv{X}\rv{Y}}(f)}{S_{\rv{X}}(f)},
\end{equation}
and $\rv{W}_n$ is an additive colored Gaussian noise with power spectrum
\begin{equation}
	S_{\rv{W}}(f) = S_{\rv{Y}}(f) - \frac{|S_{\rv{X}\rv{Y}}(f)|^2}{S_{\rv{X}}(f)}.
\end{equation}

The \acrfull{ib} rate of a bivariate stationary source with memory, which is given as a limit of normalized mutual information associated with vectors of source samples,  can be written as
\begin{equation} \label{eq:ib_rate_definion}
	\mathcal{R}^{ib}(C) = \lim_{ n \rightarrow \infty } R_n^{ib}(C;\pmf{\rv{X}^n \rv{Y}^n}
	),
\end{equation}
where
\begin{equation} \label{eq:normalized_ib_function}
	\begin{aligned}
		& R_n^{ib}(C;\pmf{\rv{X}^n \rv{Y}^n}) =
		&& \sup_{\pmf{\rv{Z}^n|\rv{Y}^n}} 
		&&  \frac{1}{n} I(\rv{X}^n;\rv{Z}^n) \\
		& && \text{subject to}
		&& \frac{1}{n} I(\rv{Y}^n;\rv{Z}^n) \leq C
	\end{aligned}
\end{equation}
is the normalized \acrshort{ib} function for random vectors $ (\rv{X}^n,\rv{Y}^n,\rv{Z}^n) $. We will usually abbreviate $ R_n^{ib}(C;\pmf{\rv{X}^n \rv{Y}^n}) $ as $ R_n^{ib}(C) $. The channel $ \pmf{\rv{Z}^n|\rv{Y}^n} $ which achieves this supremum subject to the bottleneck constraint is termed an \emph{optimal test channel}.

\paragraph*{Contributions}
Before understanding complex models such as DNN, one must first understand the canonical models, such as the Gaussian one. This is also a common theoretical approach in machine learning. For example, there are many works on two-layer neural networks, or even linear networks (which are in some sense trivial since a concatenation of linear operations is just a linear operation on its own). The IB method provides a natural fidelity measure for modern source coding applications. The standard rate-distortion problem for Gaussian processes with least-squares loss has been considered in [14]. Thus, our problem extends previous works by considering a more general distortion measure. Furthermore, stochastic colored process stands as a good model for a general source with continuous stream of output symbols, i.e. audio, video, sensor output. Gaussian processes are a simple model for such processes which are usually manageable for analytical consideration.

We first state a ``water filling" solution for the \acrshort{ib} function of a stationary bivariate Gaussian source in terms of its \acrshort{snr} spectrum. Since \acrshort{ib} is essentially a remote source coding with log-loss distortion \cite{Dobrushin1962,Wolf1970}, the resulting formula is similar in spirit to the capacity formula of power-constrained \acrfull{isi} channel with Gaussian noise \cite{Hirt1988,Shamai1991}, and the rate-distortion function of a stationary Gaussian source \cite{Berger1971}. 
The result above is derived from asymptotic quantities of mutual information between Gaussian random vectors in the frequency domain \cite{Gray2006}. We next translate these vector mutual information measures to scalar ones via linear prediction. This parallels a result which states that the capacity of the \acrshort{isi} channel is equal to the single letter mutual information over a slicer embedded in a decision-feedback noise-prediction loop \cite{Cioffi1995}, and similarly, that the rate-distortion function equals the single letter mutual information over an \acrfull{awgn} channel embedded in a source prediction loop \cite{Zamir2008}.
We show that a parallel result holds for the \acrshort{ib} rate $ \mathcal{R}^{ib}(C) $, which is equal to the scalar mutual information over an \acrshort{awgn} channel embedded in a source prediction loop, as shown in \autoref{figure:predictive_test_channel}. This result implies that $ \mathcal{R}^{ib}(C) $ can essentially be realized sequentially. 
Finally, we consider a compound version of the \acrshort{ib} problem with Gaussian processes, in which the cross-spectrum of the processes $(\{\rv{X}_t\},\{\rv{Y}_t\})$ is not fully specified, and it is only known that the mutual information rate $\mathcal{I}(\{\rv{X}_t\};\{\rv{Y}_t\})\geq C_1$ for some $C_1$. In general, this problem is motivated by the uncertainty of the correlation between $ \{\rv{X}_t\}$ and $\{\rv{Y}_t\}$, and we refer the reader to \cite{Dikshtein2022a} for a more elaborate discussion. Here we provide an explicit solution for this scenario.

\ifthenelse{\boolean{fullver}}{}{
Omitted proofs and other details are in the full version of this paper \cite{Dikshtein2022b}. \looseness=-1}

\subsection{Related Work}
The vector Gaussian \acrshort{ib} setting was first considered in \cite{Chechik2005}. It was shown that jointly Gaussian vectors triple $ \bv{X} \rightarrow \bv{Y} \rightarrow \bv{Z} $ is optimal \cite{Globerson2004},  and a closed-form formula for the \acrshort{ib} curve was obtained. Linear operations can typically obtain optimal solutions for the Gaussian setting. However, channel output compression subject to squared-error distortion was shown to be sub-optimal to the Gaussian \acrshort{ib} setting \cite{Winkelbauer2014}. The latter issue was solved in \cite{Meidlinger2014} by adding a prefilter prior the rate-distortion block. The discrete-time Gaussian process setting was considered in \cite{Meidlinger2014} and \cite{Homri2018}. 

For the jointly stationary bivariate source $(\{\rv{X}_t\},\{\rv{Y}_t\})$, the \acrfull{pf} rate \cite{Makhdoumi2014} is defined by
\begin{equation} \label{eq:pf_rate_definion}
	\mathcal{R}^{pf}(C) = \lim_{ n \rightarrow \infty } R_n^{pf}(C;\pmf{\rv{Y}^n \rv{Z}^n}
	),
\end{equation}
where
\begin{equation} \label{eq:normalized_pf_function}
	\begin{aligned}
		& R_n^{pf}(C;\pmf{\rv{Y}^n \rv{Z}^n}) =
		&& \inf_{\pmf{\rv{Z}^n|\rv{Y}^n}} 
		&&  \frac{1}{n} I(\rv{X}^n;\rv{Z}^n) \\
		& && \text{subject to}
		&& \frac{1}{n} I(\rv{X}^n;\rv{Y}^n) \geq C
	\end{aligned}
\end{equation}
is the normalized \acrshort{pf} function for random vectors $ (\rv{X}^n,\rv{Y}^n,\rv{Z}^n) $. 


	\section{Information Bottleneck for Gaussian Processes}

    In this section, we review and state results on \acrshort{ib} for jointly Gaussian random vectors and jointly Gaussian random processes.
    \subsection{Review: \acrshort{ib} for jointly Gaussian pair $     (\bv{X},\bv{Y}) $}
    We begin with a brief review of the basic setting in  which $ (\bv{X},\bv{Y}) $ in \eqref{eq:normalized_ib_function} are jointly Gaussian random vectors.  Let $ \bv{X} \sim \mathcal{N}(\bv{0},\Sigma_{\bv{X}}) $, $ \bv{Y} \sim \mathcal{N}(\bv{0},\Sigma_{\bv{Y}}) $ with cross-covariance matrix $ \Sigma_{\bv{X}\bv{Y}} $. Those bivariate random Gaussian vectors can equivalently represented using a linear additive-noise form, i.e.,
\begin{equation}
	\bv{Y} = H \bv{X} + \bv{W},
\end{equation}
where $ H = \Sigma_{\bv{X}\bv{Y}} \Sigma_{\bv{X}}^{-1} $ and $ \Sigma_{\bv{W}} = \Sigma_{\bv{Y}} - \Sigma_{\bv{X}\bv{Y}}^T \Sigma_{\bv{X}}^{-1} \Sigma_{\bv{X}\bv{Y}}  $. Let $ O \Gamma O^T $ be the \acrfull{svd} of  the \acrfull{snr} covariance matrix $ \Sigma_{snr} $, defined by
\begin{equation}
	\Sigma_{snr} \triangleq \Sigma_{\bv{W}}^{-1/2} H \Sigma_{\bv{X}} H^T \Sigma_{\bv{W}}^{-1/2},
\end{equation}
with $ \Gamma = \diag \{\gamma_i\}_{i=1}^n  $.
The vector Gaussian \acrshort{ib} function was first determined in \cite{Chechik2005} (see also  \cite{Meidlinger2014} and \cite[Sec. 2]{Goldfeld2020}, where the trade-off parameter $ \beta $ is replaced here by $ 1+1/\theta $). We next present this result with a structure that resembles the classical rate-distortion function for Gaussian sources with memory \cite[Thm. 4.5.3]{Berger1971}, where $ \theta $ plays the role of the water-filling level -- First $ \theta $ is chosen to satisfy the bottleneck constraint $ C $, and then the information rate is calculated. The importance of this form of presentation is that it will be used in the next section to obtain realizations in the time domain. 
\begin{lemma} \label{lemma:ib_jointly_gaussian_vectors}
	The normalized \acrshort{ib} rate \eqref{eq:normalized_ib_function} is achieved by a jointly Gaussian triple $(\bv{X},\bv{Y},\rv{Z})$ and is given by
	\begin{equation}
		R_n^{ib}(C) 
		=
		\frac{1}{2n} \sum_{i=1}^n
		\log
		\left[
		\frac{1+\gamma_i}{1+\theta}
		\right]^+,
	\end{equation}
	where  $\theta$ is  the water filling level chosen such that
	\begin{equation}
		\frac{1}{2n} \sum_{i=1}^N  \log\left[ \frac{ \gamma_i }{\theta}\right]^+ = C,
	\end{equation}
	with $ [x]^+ \triangleq \max \{1,x\} $,
	and $ \{\gamma_i\}_{i=1}^n $ are the eigenvalues of the \acrshort{snr} covariance matrix $ \Sigma_{snr} $.
\end{lemma}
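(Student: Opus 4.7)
The plan is to restrict to jointly Gaussian test channels, diagonalize the resulting matrix optimization via the SVD of the SNR matrix, and then solve the induced parallel scalar problems by water-filling. First I would argue that the supremum in \eqref{eq:normalized_ib_function} is attained by a jointly Gaussian triple. Since $(\bv{X},\bv{Y})$ is Gaussian and the Markov chain $\bv{X}\to\bv{Y}\to\rv{Z}$ must hold, Gaussianity of the optimal $\rv{Z}$ follows from the conditional-entropy-bound argument of Globerson and Tishby \cite{Globerson2004} (as used in \cite{Chechik2005}); one may therefore restrict to $\rv{Z}=A\bv{Y}+\bv{N}$ with $\bv{N}\sim\mathcal{N}(\bv{0},\Sigma_{\bv{N}})$ independent of $\bv{Y}$, reducing \eqref{eq:normalized_ib_function} to a finite-dimensional optimization over the pair $(A,\Sigma_{\bv{N}})$.

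Second, I would diagonalize. Both Gaussian mutual informations then become log-determinants of conditional covariances. Whitening $\bv{Y}$ by $\Sigma_{\bv{W}}^{-1/2}$ and rotating by the SVD basis $O$ of $\Sigma_{snr}$ converts the effective channel $\bv{Y}=H\bv{X}+\bv{W}$ into $n$ decoupled scalar Gaussian channels with SNRs $\gamma_1,\ldots,\gamma_n$. Invariance of mutual information under invertible linear changes of variable allows one to assume, without loss of optimality, that $A$ and $\Sigma_{\bv{N}}$ are simultaneously diagonal in this basis; a Hadamard-type inequality applied to the ratio of determinants appearing in the objective and constraint rules out any benefit from off-diagonal entries, mirroring the argument in \cite{Chechik2005}. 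One is then left with $n$ independent scalar Gaussian IB problems, each admitting the closed form $R_i=\tfrac{1}{2}\log\frac{1+\gamma_i}{1+\theta_i}$ with bottleneck cost $\tfrac{1}{2}\log(\gamma_i/\theta_i)$ and $\theta_i\in(0,\gamma_i]$.

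Third, I would solve the separated allocation problem
\begin{equation*}
\max_{\{\theta_i\}}\ \sum_{i=1}^n \tfrac{1}{2}\log\tfrac{1+\gamma_i}{1+\theta_i}\quad\text{s.t.}\quad \sum_{i=1}^n \tfrac{1}{2}\log\tfrac{\gamma_i}{\theta_i}\le nC,\ \ 0<\theta_i\le\gamma_i.
\end{equation*}
Forming the Lagrangian and writing the KKT conditions, stationarity forces a common water level $\theta_i=\theta$ on the active indices $\{i:\gamma_i>\theta\}$, while inactive indices saturate at $\theta_i=\gamma_i$ and contribute zero to both sums. Collecting the two cases yields the $[\,\cdot\,]^+=\max\{1,\cdot\}$ form in the statement, with $\theta$ pinned by the bottleneck constraint.

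The main obstacle is the first step: Gaussian optimality of $\rv{Z}$ is the only nontrivial ingredient and genuinely exploits the Gaussianity of $\bv{Y}$ through the conditional entropy bound, whereas the remaining steps are routine parallel-channel manipulations paralleling the classical Gaussian rate--distortion derivation of \cite[Thm.~4.5.3]{Berger1971}.
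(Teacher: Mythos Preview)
Your proposal is correct and follows essentially the Chechik et al.\ route you cite, but the paper's own proof proceeds somewhat differently. Rather than first invoking \cite{Globerson2004} to restrict to Gaussian $\rv{Z}=A\bv{Y}+\bv{N}$ and then using a Hadamard-type determinant inequality to force $(A,\Sigma_{\bv{N}})$ diagonal, the paper takes an information-theoretic converse route: it diagonalizes via the SVD of the \emph{normalized correlation} matrix $\Sigma_{\bv{X}}^{-1/2}\Sigma_{\bv{X}\bv{Y}}\Sigma_{\bv{Y}}^{-1/2}$ (not $\Sigma_{snr}$) to obtain parallel channels $\tilde{\rv{X}}_i=d_i\tilde{\rv{Y}}_i+\sqrt{1-d_i^2}\,\tilde{\rv{W}}_i$, then applies the chain rule with the identification $\rv{Z}_i\triangleq(\rv{Z},\tilde{\rv{Y}}^{i-1})$, bounds $I(\bv{X};\rv{Z})\le\sum_i I(\tilde{\rv{X}}_i;\rv{Z}_i)$ via conditioning and a Markov argument, and finally invokes a \emph{scalar} Gaussian IB bound (proved separately via the entropy power inequality) on each term. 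The KKT water-filling step is the same as yours, and a final change of variables $\gamma_i=d_i^2/(1-d_i^2)$ (established in a dedicated lemma) converts the correlation-based answer to the SNR form of the statement. Your approach buys directness once Gaussian optimality is accepted as a black box; the paper's approach is self-contained in that Gaussian optimality emerges from the EPI-based scalar bound rather than being assumed up front, and it avoids the Hadamard step entirely.
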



	\subsection{\acrshort{ib} for jointly Gaussian Processes}
    Equipped with the form of the \acrshort{ib} function in \autoref{lemma:ib_jointly_gaussian_vectors}, the solution to the discrete-time processes setting  follows from Szeg\"o's theorem \cite[Thm. 4.2]{Gray2006}.  While a related result with filtered observation was first published in \cite{Meidlinger2014}, in our approach such assumption is not required.
\begin{theorem} \label{theorem:ib_gaussian_process}
	The \acrshort{ib} rate \eqref{eq:ib_rate_definion} for Gaussian random processes is given by
	\begin{equation}
		\mathcal{R}^{ib}(C) = \frac{1}{2} \int_{-1/2}^{1/2}  \log \left[ \frac{1+\Gamma(f)}{1+\theta}\right]^+ \mathrm{d} f,
	\end{equation}
	where we choose the \textit{water level}  $ \theta $ so that the total rate is $ C $,
	\begin{equation}
		C = \frac{1}{2}  \int_{-1/2}^{1/2} \log \left[ \frac{\Gamma(f)}{\theta} \right]^+ \mathrm{d} f,
	\end{equation}
	and $ \Gamma(f) $ is the \acrshort{snr} spectrum, defined by
	\begin{equation}
		\Gamma(f) = \frac{|H(f)|^2  S_{\rv{X}}(f)}{S_{\rv{W}}(f)}= \frac{|S_{\rv{X}\rv{Y}}(f)|^2}{S_{\rv{X}}(f) S_{\rv{Y}}(f) - |S_{\rv{X}\rv{Y}}(f)|^2}.
	\end{equation}
\end{theorem}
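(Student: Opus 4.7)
The plan is to combine Lemma~\ref{lemma:ib_jointly_gaussian_vectors} with Szeg\H{o}'s theorem on the asymptotic eigenvalue distribution of Toeplitz matrices. Lemma~\ref{lemma:ib_jointly_gaussian_vectors} already reduces the finite-$n$ IB function to a water-filling sum over the eigenvalues $\{\gamma_i\}_{i=1}^n$ of $\Sigma_{snr}$, so the remaining task is to identify their asymptotic density with the SNR spectrum $\Gamma(f)$ and then pass to the limit in both the objective and the constraint.

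First I would observe that, since $(\{\rv{X}_t\},\{\rv{Y}_t\})$ is jointly stationary, the covariance matrices $\Sigma_{\bv{X}}$, $\Sigma_{\bv{Y}}$, $\Sigma_{\bv{X}\bv{Y}}$, and $\Sigma_{\bv{W}}$ are Toeplitz with symbols $S_{\rv{X}}(f)$, $S_{\rv{Y}}(f)$, $S_{\rv{X}\rv{Y}}(f)$, and $S_{\rv{W}}(f)$. Using Gray's theory of asymptotically equivalent sequences of matrices \cite{Gray2006}, products, Hermitian transposes, inverses, and fractional powers of Toeplitz sequences are asymptotically Toeplitz with the corresponding symbols, provided $S_{\rv{X}}(f)$ and $S_{\rv{W}}(f)$ are essentially bounded and bounded away from zero. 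Under these standard regularity assumptions on the source, the sequence $\Sigma_{\bv{W}}^{-1/2}H\Sigma_{\bv{X}}H^T\Sigma_{\bv{W}}^{-1/2}$ is therefore asymptotically Toeplitz with symbol $|H(f)|^2\,S_{\rv{X}}(f)/S_{\rv{W}}(f)$, which, after substituting $H(f)=S_{\rv{X}\rv{Y}}(f)/S_{\rv{X}}(f)$ and the expression for $S_{\rv{W}}(f)$, equals the claimed $\Gamma(f)$.

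With this identification in hand, Szeg\H{o}'s theorem \cite[Thm.~4.2]{Gray2006} gives $\frac{1}{n}\sum_{i=1}^n F(\gamma_i)\to \int_{-1/2}^{1/2} F(\Gamma(f))\,df$ for every continuous bounded $F$. Applied to the two water-filling integrands $\log[(1+\gamma)/(1+\theta)]^+$ and $\log[\gamma/\theta]^+$, this converts the finite-$n$ expressions of Lemma~\ref{lemma:ib_jointly_gaussian_vectors} into the claimed integrals. The finite-$n$ water level $\theta^{(n)}$ converges to the limiting $\theta$ by strict monotonicity and continuity of the constraint map in $\theta$ on the region where it is active, which also justifies interchanging the limit in $n$ with the choice of water level.

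The main technical obstacle is that $\log[\cdot]^+$ is not globally continuous (it has a kink at $\gamma=\theta$) and is unbounded as $\gamma\to\infty$, while the cleanest forms of Szeg\H{o}'s theorem require $F$ continuous and bounded on the essential range of the symbol. I would circumvent this by sandwiching $\log[\gamma/\theta]^+$ between continuous truncations that agree with it on $[\theta,M]$, applying Szeg\H{o}'s theorem to each truncation, and then removing the truncation via monotone/dominated convergence using integrability of $\log(1+\Gamma(f))$, which is guaranteed by the finiteness of the mutual-information rate between $\{\rv{X}_t\}$ and $\{\rv{Y}_t\}$. The kink at $\gamma=\theta$ contributes no mass in the limit as long as $\Gamma(f)$ is not flat at that level on a set of positive measure, and a limiting argument handles the degenerate case; an identical argument treats the objective $\log[(1+\gamma)/(1+\theta)]^+$.
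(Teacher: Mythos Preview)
Your proposal is correct and follows essentially the same route as the paper: invoke Lemma~\ref{lemma:ib_jointly_gaussian_vectors} for the finite-$n$ water-filling form and then apply Szeg\H{o}'s theorem \cite[Thm.~4.2]{Gray2006} to pass from eigenvalue sums over $\{\gamma_i\}$ to integrals over the symbol $\Gamma(f)$. The paper's own proof is in fact terser than yours---it simply writes down the limit without discussing asymptotic Toeplitz equivalence of $\Sigma_{snr}$, the regularity needed for Szeg\H{o}, or the convergence of the water level $\theta^{(n)}$---so your attention to those technicalities (truncation for the kink and unboundedness, monotonicity of the constraint map in $\theta$) goes beyond what the paper supplies rather than diverging from it.
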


We may define the \textit{distortion spectrum} \cite{Berger1971} as  
	\begin{equation} \label{eq:prediction_Dtheta_definition}
		D_{\theta}(f) = 
		\begin{cases}
			\theta, & \Gamma(f) > \theta \\
			\Gamma(f), & \text{otherwise}
		\end{cases}
	\end{equation}
	and then \autoref{theorem:ib_gaussian_process} can be equivalently stated as
	\begin{equation}
		\mathcal{R}^{ib}(C) = \frac{1}{2} \int_{-1/2}^{1/2}  \log \left[ \frac{1+\Gamma(f)}{1+D_{\theta}(f)}\right] \mathrm{d} f,
	\end{equation}
	and 
	\begin{equation}
		C = \frac{1}{2} \int_{-1/2}^{1/2}  \log \left[ \frac{\Gamma(f)}{D_{\theta}(f)} \right] \mathrm{d} f.
	\end{equation}

As immediate consequence of \autoref{theorem:ib_gaussian_process} is that the  optimal channel from $\rv{Y}_n$ to $\rv{Z}_n$ can be described in a linear \acrshort{awgn} form
\begin{equation} \label{eq:ib_rate_linear_awgn_form}
	\rv{Z}_n = h_{2,n} * (h_{1,n} * g_n * \omega_n *  \rv{Y}_n + \rv{N}_n),
\end{equation}
where $ \omega_n $, $ g_n $, $ h_{1,n} $ and $ h_{2,n} $ are impulse responses of a noise-whitening filter, shaping filter, a suitable prefilter and postfilter respectively, whose absolute squared value frequency responses are given by
\begin{align}
	|\Omega(f) |^2 & = \frac{1}{S_{\rv{W}}(f)}, \quad
	|G(f)|^2  = \frac{\Gamma(f)}{1+ \Gamma(f)}, \\
	|H_1(f)|^2 &= 1- \frac{D(f)}{S_{\rv{Y}}(f)}, \quad
	H_2(f) = H_1^*(f),
\end{align}
and $ \rv{N}_n \sim \mathcal{N}(0,\theta) $. The respective forward channel realization is illustrated in \autoref{figure:forward_channel_realization}.

\begin{remark}
	The absolute frequency response of the shaping filter $ G(f) $ is exactly the square root of the noncausal Wiener filter. Thus, the shaping filter plays the role of denoiser in the spirit of \acrshort{mmse} estimation of $ \{ \rv{X}_t \} $ from $ \{\rv{Y}_t\} $.
\end{remark}



\begin{figure}
	\centering
	\begin{tikzpicture}[scale=0.45, every node/.style={scale=0.6}]
	\coordinate (inY) at (0,0);
	\node[draw = black, text width=2cm,align =center] (whitefilt) at (2,0) {Whitening Filter $ 1/\sqrt{ S_{ \tilde{ \rv{W} } }(f)} $};
	
	\node[draw = black, text width=2cm,align =center] (shapefilt) at (6,0) {Shaping Filter $ G(f) $};
	
	\node[draw = black, text width=2cm,align =center] (prefilt) at (10,0) {Pre-filter $ H_1(f) $};
	
	\node[circle, draw = black] (adder2) at (13,0) {$ \sum $};

	\node[draw = black, text width=2cm,align =center] (postfilt) at (16,0) {Post-filter $ H_2(f) $};
	
	\coordinate (outZ) at (18,0);
		
	\node at (12.5,0.75) {$ + $};
	
	\coordinate (noiseN) at(13,2);
	
	\draw[->] (inY)  node[left]{$ \rv{Y}_n $}-- (whitefilt);
	\draw[->] (whitefilt) -- node[above]{$ \tilde{\rv{Y}}_n $} (shapefilt);
	\draw[->] (shapefilt) -- node[above]{$ \rv{Y}'_n $} (prefilt);
	
	\draw[->] (prefilt) -- node[above]{$ \rv{U}_n $} (adder2);

	\draw[->] (adder2) -- node[above]{$ \rv{V}_n $} (postfilt);
	\draw[->] (postfilt) --  (outZ) node[right]{$ \rv{Z}_n $};

	\draw[->] (noiseN) node[above]{$ \rv{N}_n $} -- (adder2);
	
\end{tikzpicture}
	\caption{Forward channel realization.}
	\label{figure:forward_channel_realization}
\end{figure}
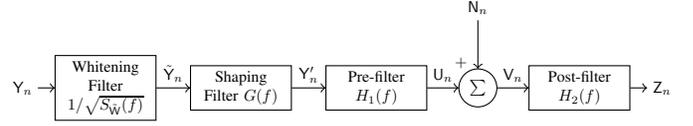

	\section{Remote Source Coding in Time Domain via Prediction}

	\section{Compound Information Bottleneck}
 	\label{section:comib_gaussian_processes}
	Following \cite{Dikshtein2022a} we next consider the \acrfull{comib} function of a bivariate stationary source with memory. Here, the \acrshort{comib} is given as a limit of normalized mutual information associated with vectors of source samples. For a real valued bivariate source $ (\{\rv{X}_t, \rv{Y}_t\}) $ (see \eqref{eq:bivariate_source}), normalized \acrshort{pf} constraint $ C_1 $, and normalized  \acrshort{ib} constraint $ C_2 $, the \acrshort{comib} rate can be written as
\begin{equation} \label{eq:oblib_rate_definition}
	\mathcal{R}^{comib}(C_1,C_2) = \lim_{ n \rightarrow \infty }  R_n^{comib}(C_1,C_2),
\end{equation}
where
\begin{equation} \label{eq:normalized_comib_function}
		R_n^{comib}(C_1,C_2) =
		\sup_{\pmf{\rv{Z}^n|\rv{Y}^n}} \inf_{\pmf{\rv{X}^n\rv{Y}^n}}
		\frac{1}{n} I(\rv{X}^n;\rv{Z}^n) 
\end{equation}
is the corresponding finite vector Gaussian \acrshort{comib} function. The optimization in \eqref{eq:normalized_comib_function} is over the sets $\{ \pmf{\rv{X}^n\rv{Y}^n} \} $ and $\{ \pmf{\rv{Z}^n|\rv{Y}^n} \}$, satisfying $\frac{1}{n} I(\rv{X}^n;\rv{Y}^n) \geq C_{1} $ and  $\frac{1}{n} I(\rv{Y}^n;\rv{Z}^n) \leq C_{2}$, respectively. 

A simple way to obtain solution to \eqref{eq:oblib_rate_definition} is by establishing a saddle point property. We briefly remind the reader this property as it will be used  in the proof. 
\begin{lemma}[Optimality of Saddle Point {\cite[Sec. 5.4.2]{Boyd2014}}] \label{lemma:optimality_saddle_point}
	Suppose there exists a saddle point $(\Tilde{w},\Tilde{z})$, satisfying
	$
	f(\Tilde{w},\Tilde{z}) = \inf_{w \in \set{W}} f(w,\Tilde{z}) 
	$ and $
	f(\Tilde{w},\Tilde{z}) = \sup_{z \in \set{Z}} f(\Tilde{w},z) 
	$,
	then
	\begin{equation}
		f(\Tilde{w},\Tilde{z}) = \sup_{z \in \set{Z}} \inf_{w \in \set{W}} f(w,z) .
	\end{equation}
\end{lemma}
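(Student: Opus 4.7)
The statement to prove is the standard saddle point optimality lemma, which is classical (and is in fact cited from Boyd--Vandenberghe). My plan is to establish the claimed equality
\[
f(\tilde{w},\tilde{z}) = \sup_{z \in \set{Z}} \inf_{w \in \set{W}} f(w,z)
\]
by showing two inequalities, one in each direction, each relying on the saddle point property in a symmetric way.

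For the ``$\geq$'' direction (i.e.\ $f(\tilde w,\tilde z) \geq \sup_z\inf_w f(w,z)$), I would argue as follows. For any fixed $z \in \set{Z}$, the pointwise inequality $\inf_{w \in \set{W}} f(w,z) \leq f(\tilde w, z)$ holds by definition of infimum. Taking the supremum over $z$ on both sides yields $\sup_z \inf_w f(w,z) \leq \sup_z f(\tilde w, z)$. By the second saddle point hypothesis, $\sup_z f(\tilde w, z) = f(\tilde w, \tilde z)$, which gives the desired upper bound. Note this direction actually holds in a weaker form (it is just the standard weak duality-style argument applied to $\tilde w$) and does not require the first saddle condition.

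For the ``$\leq$'' direction (i.e.\ $f(\tilde w,\tilde z) \leq \sup_z\inf_w f(w,z)$), I would use the first saddle condition: $f(\tilde w,\tilde z) = \inf_{w \in \set{W}} f(w,\tilde z)$. Since $\tilde z \in \set{Z}$, the right-hand side is a particular value of the function $z \mapsto \inf_w f(w,z)$, hence bounded above by its supremum, giving $\inf_w f(w,\tilde z) \leq \sup_z \inf_w f(w,z)$. Chaining these yields the reverse inequality. Combining the two directions closes the proof.

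There is no real obstacle here: the argument is essentially two lines once the saddle conditions are invoked in the right order, and no compactness, continuity, or convexity hypotheses are needed because the existence of the saddle point is taken as a hypothesis. The only minor subtlety to flag is that the dual inequality $\sup_z \inf_w f(w,z) \leq \inf_w \sup_z f(w,z)$ (weak minimax) is always true and is actually recovered as a byproduct of the ``$\geq$'' step above, but invoking it is not required for the stated conclusion.
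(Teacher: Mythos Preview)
Your proof is correct and complete. The paper itself does not prove this lemma at all: it is simply stated with a citation to Boyd--Vandenberghe \cite[Sec.~5.4.2]{Boyd2014} and then invoked later in the proof of \autoref{theorem:comib_discrete_gaussian_RP}, so there is nothing to compare against beyond noting that your two-inequality argument is exactly the standard one found in that reference.
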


Next, we consider the respective dual \acrshort{pf} problem \eqref{eq:normalized_pf_function} for Gaussian random variables. The problem in  \eqref{eq:pf_rate_definion} is rather delicate -- e.g., if $ (\rv{Y},\rv{Z}) $ are scalar jointly Gaussian random variables, the \acrshort{pf} rate is zero since one can use the channel from $ \rv{Y} $ to $ \rv{X} $ to describe the less significant bits of $ \rv{Y} $ \cite{Shamai2021}. Thus, additional constraints should be imposed here to have a non-trivial rate, as stated in the following theorem.
\begin{lemma} \label{lemma:pf_vector_gaussian}
	Suppose $\bv{X} \rightarrow \bv{Y} \rightarrow \bv{Z}$ constitute a jointly Gaussian vector Markov chain with positive definite marginal covariance matrices $\Sigma_{\bv{X}}$, $\Sigma_{\bv{Y}}$, and $\Sigma_{\bv{Z}}$ respectively, and the cross-covariance matrix of $\bv{Z}$ and $\bv{Y}$ is given by $\Sigma_{\bv{Z}\bv{Y}}$.
	Let $\Sigma_{\bv{Y}\bv{X}} $ be the cross-covariance matrix of the optimal solution to \eqref{eq:normalized_pf_function}.
	Further, let $U_1 \Phi V_1^T$ be the \acrfull{svd} of $\Sigma_{\bv{Y}}^{-1/2} \Sigma_{\bv{Y}\bv{X}} \Sigma_{\bv{X}}^{-1/2}$ and $U_2 \Psi V_2^T$ be the \acrshort{svd} 
	of $\Sigma_{\bv{Z}}^{-1/2} \Sigma_{\bv{Z}\bv{Y}} \Sigma_{\bv{Y}}^{-1/2}$.
	Then, the underlying Gaussian \acrshort{pf} problem \eqref{eq:normalized_pf_function} can be relaxed to the following optimization problem:
	\begin{equation} \label{eq:pf_rate_general_gaussian}
		\begin{aligned}
			&R_n^{\mathsf{PF}}(C_1) = \mkern-20mu
			&& \min_{\scaleto{U_1 \in \set{U}(n), \{\phi_{i}\}}{6pt}} 
			&& \mkern-20mu \minus \frac{1}{2 n} \log \det( I
			\minus U_1^T \Phi^2 U_1 V_2^T \Psi^2 V_2 
			) \\
			&&& \text{subject to} 
			&& \mkern-10mu  -\frac{1}{2n} \sum_{i=1}^n   \log(1-\phi_i^2) = C_1,
		\end{aligned}
	\end{equation}
	where $\set{U}(n)$ is the set of all $n \times n$ unitary matrices (the unitary group), and $\{\phi_i\}$ are the entries of the diagonal matrix $\Phi$.
\end{lemma}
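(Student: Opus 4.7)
The plan is to restrict the search in \eqref{eq:normalized_pf_function} to the jointly Gaussian class---this is precisely the \emph{relaxation} meant in the statement---and then exploit the closed form of mutual information between jointly Gaussian vectors to reduce the problem to a finite-dimensional matrix optimization indexed by the singular-value decomposition of the candidate cross-covariance $\Sigma_{\bv{Y}\bv{X}}$. Since $\bv{X}$ is only constrained through its joint with $\bv{Y}$ and through the induced Markov chain $\bv{X}\to\bv{Y}\to\bv{Z}$, one may normalize its marginal (e.g., $\Sigma_{\bv{X}}=I$) without loss of generality.

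First I would write the two mutual informations in log-determinant form,
\begin{equation*}
I(\bv{X};\bv{Y}) = -\frac{1}{2}\log\det(I-AA^T), \quad A\triangleq\Sigma_{\bv{Y}}^{-1/2}\Sigma_{\bv{Y}\bv{X}}\Sigma_{\bv{X}}^{-1/2},
\end{equation*}
and similarly $I(\bv{X};\bv{Z}) = -\frac{1}{2}\log\det(I-BB^T)$ with $B\triangleq\Sigma_{\bv{X}}^{-1/2}\Sigma_{\bv{X}\bv{Z}}\Sigma_{\bv{Z}}^{-1/2}$. Invoking the Markov identity $\Sigma_{\bv{X}\bv{Z}}=\Sigma_{\bv{X}\bv{Y}}\Sigma_{\bv{Y}}^{-1}\Sigma_{\bv{Y}\bv{Z}}$ factors $B$ into the two stated normalized cross-covariances, and substituting their SVDs yields $B = V_1\Phi U_1^T V_2\Psi U_2^T$.

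Next, I would form $BB^T = V_1\Phi U_1^T V_2 \Psi^2 V_2^T U_1 \Phi V_1^T$. Peeling off the unitary $V_1$ via $\det(U(I-M)U^T)=\det(I-M)$ and applying the identity $\det(I-\Phi M\Phi)=\det(I-M\Phi^2)$ gives
\begin{equation*}
I(\bv{X};\bv{Z}) = -\frac{1}{2}\log\det\bigl(I-\Phi^2 U_1^T V_2\Psi^2 V_2^T U_1\bigr).
\end{equation*}
The decisive structural point is that $V_1$ drops out entirely---only $U_1$ and $\{\phi_i\}$ remain as free parameters. Meanwhile $AA^T=U_1\Phi^2 U_1^T$ yields $I(\bv{X};\bv{Y})=-\frac{1}{2}\sum_i\log(1-\phi_i^2)$, which produces the constraint in \eqref{eq:pf_rate_general_gaussian} with equality at the optimum. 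A final reparametrization of the free unitary by $U_1\mapsto U_1 V_2^T$---legitimate because $U_1^T V_2$ and $U_1 V_2^T$ both sweep the entire unitary group as $U_1$ does---rewrites the objective in the symmetric form $-\frac{1}{2n}\log\det(I - U_1^T\Phi^2 U_1 V_2^T \Psi^2 V_2)$ stated in the lemma.

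The main obstacle I expect is twofold. Algebraically, one must verify carefully that the two determinantal forms do coincide under the unitary reparametrization (they are not equal for a fixed $U_1$, only after optimization over the unitary group), and that the $V_1$ cancellation really goes through after the cyclic manipulations and the $\det(I-MN)=\det(I-NM)$ steps. Conceptually, one must justify that the Gaussian relaxation is the appropriate surrogate: while the jointly Gaussian class is closed under the Markov constraint, optimality of Gaussians for the underlying PF problem is a nontrivial extremality statement, and here it is ultimately legitimized downstream by pairing this reduction with the saddle-point criterion of \autoref{lemma:optimality_saddle_point}.
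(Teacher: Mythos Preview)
Your proposal is correct and follows essentially the same route as the paper: both compute $I(\bv{X};\bv{Y})$ and $I(\bv{X};\bv{Z})$ in log-determinant (canonical-correlation) form, factor the normalized cross-covariance of $(\bv{X},\bv{Z})$ through $\bv{Y}$ via the Markov identity $\Sigma_{\bv{X}\bv{Z}}=\Sigma_{\bv{X}\bv{Y}}\Sigma_{\bv{Y}}^{-1}\Sigma_{\bv{Y}\bv{Z}}$, substitute the two SVDs, and simplify using $\det(I-MN)=\det(I-NM)$ until $V_1$ and $U_2$ drop out. Your treatment is in fact slightly more explicit than the paper's on one point: the final determinant you obtain, $\det(I-\Phi^2 U_1^T V_2\Psi^2 V_2^T U_1)$, matches the lemma's stated form $\det(I-U_1^T\Phi^2 U_1 V_2^T\Psi^2 V_2)$ only after the unitary reparametrization you flag, and the paper's own appendix glosses over this (its proof also lands on a cyclically-shifted variant and relies implicitly on the freedom in $U_1$).
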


Note that in contrast to \autoref{lemma:ib_jointly_gaussian_vectors}, we do not have a closed form-solution for the general Gaussian \acrshort{pf} problem, and additional numerical optimization is needed. However, for the compound setting we obtain an exact solution which incorporates a white \acrshort{snr} spectrum. Such spectrum has a constant intensity over the entire bandwidth of the signal.

\begin{theorem} \label{theorem:comib_discrete_gaussian_RP}
    Suppose that \eqref{eq:oblib_rate_definition} is evaluated for the scenario where $\{\rv{X}_t \}$ and $ \{\rv{Y}_t \}$ are jointly Gaussian random processes with marginal power spectral densities $S_{\rv{X}}(f)$ and $S_{\rv{Y}}(f)$ respectively. 
    The resulting optimal channel from $ \rv{X}_n $ to $ \rv{Z}_n $ has a linear form, i.e.,
    \begin{align}
        \rv{Y}_n &= h_n * \rv{X}_n + \rv{W}_n, \\
    	\rv{Z}_n &= g_n * \rv{Y}_n + \rv{V}_n,
    \end{align}
where:
	  $ h_n $ and $ g_n $ are impulse responses of liner time-invariant filters, whose absolute squared value frequency responses are given by:
	\begin{align}
		|H(f)|^2 &= \frac{1}{1+\gamma} \frac{S_{\rv{Y}}(f)}{S_{\rv{X}}(f)}, \quad
		|G(f)|^2 = \frac{1}{1+\lambda} , \\
		S_{\rv{W}}(f) &= \frac{1}{1+\gamma}  \frac{S_{\rv{Y}}(f)}{S_{\rv{X}}(f)}, \quad
		S_{\rv{V}}(f) = \frac{1}{1+\lambda} ;
	\end{align}
	$ \gamma = 2^{2C_1}-1 $; and $ \lambda = 2^{2C_2}-1 $.
Therefore, the optimal double-sided \acrshort{snr} spectrum is white.
The corresponding oblivious rate is
\begin{equation}
    \mathcal{R}^{comib}(C_1,C_2) = -\frac{1}{2} \log [1-(1-2^{-2C_1})(1-2^{-2C_2})].
\end{equation}
\end{theorem}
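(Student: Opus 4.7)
The plan is to prove the theorem by exhibiting a saddle point for the $\sup\inf$ in \eqref{eq:normalized_comib_function} and then invoking \autoref{lemma:optimality_saddle_point}. The candidate pair $(P^\star_{\rv{X}\rv{Y}}, P^\star_{\rv{Z}|\rv{Y}})$ is the one described in the theorem statement: both the channel from $\rv{X}$ to $\rv{Y}$ and the channel from $\rv{Y}$ to $\rv{Z}$ are linear Gaussian with a \emph{white} SNR spectrum, at levels $\gamma=2^{2C_1}-1$ and $\lambda=2^{2C_2}-1$ respectively, with noise PSDs tuned to preserve the prescribed marginals $S_{\rv{X}}(f),S_{\rv{Y}}(f)$. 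For this candidate the constraints $\mathcal{I}(\{\rv{X}_t\};\{\rv{Y}_t\})\ge C_1$ and $\mathcal{I}(\{\rv{Y}_t\};\{\rv{Z}_t\})\le C_2$ are tight, and a direct cascade computation in the frequency domain gives the composite $\rv{X}\!\to\!\rv{Z}$ SNR spectrum $\Gamma_{\rv{X}\rv{Z}}(f)=\gamma\lambda/(1+\gamma+\lambda)$, which is also white, and integrating $\frac12\log(1+\Gamma_{\rv{X}\rv{Z}}(f))$ yields the claimed $-\tfrac12\log[1-(1-2^{-2C_1})(1-2^{-2C_2})]$.

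For the $\sup$ side of the saddle-point inequality, fix $P^\star_{\rv{X}\rv{Y}}$ and observe that the inner problem reduces exactly to the Gaussian process IB of \autoref{theorem:ib_gaussian_process}, but now with a flat SNR spectrum $\Gamma(f)\equiv\gamma$. Under a constant SNR spectrum the water-filling in \autoref{theorem:ib_gaussian_process} degenerates, the distortion spectrum $D_{\theta}(f)$ is also constant, and the closed-form $\tfrac12\log[(1+\gamma)/(1+\theta)]$ matches the value computed above once $\theta$ is chosen to make $\tfrac12\log(\gamma/\theta)=C_2$. Hence the candidate $P^\star_{\rv{Z}|\rv{Y}}$ attains the $\sup$ with $P^\star_{\rv{X}\rv{Y}}$ fixed.

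The harder direction is the $\inf$ side: for the candidate $P^\star_{\rv{Z}|\rv{Y}}$ fixed, I must show that no other joint law $P_{\rv{X}\rv{Y}}$ with the prescribed marginals and satisfying $\mathcal{I}(\{\rv{X}_t\};\{\rv{Y}_t\})\ge C_1$ can drive $\mathcal{I}(\{\rv{X}_t\};\{\rv{Z}_t\})$ strictly below the value above. I would pass to the frequency domain via Szeg\"o's theorem, express $\mathcal{I}(\rv{X};\rv{Z})$ through the frequency-wise canonical correlations $\phi(f),\psi(f)$ of $(\rv{X},\rv{Y})$ and $(\rv{Y},\rv{Z})$ respectively as in \autoref{lemma:pf_vector_gaussian}, obtaining
\begin{equation*}
  \mathcal{I}(\{\rv{X}_t\};\{\rv{Z}_t\}) = -\tfrac{1}{2}\!\int_{-1/2}^{1/2}\!\log\!\bigl[1-\phi^2(f)\psi^2(f)\bigr]\,\mathrm{d}f.
\end{equation*}
Since $P^\star_{\rv{Z}|\rv{Y}}$ forces $\psi^2(f)\equiv\lambda/(1+\lambda)$ constant, and $\phi\mapsto -\log(1-\phi\,c)$ is concave in $\phi$ for fixed $c\in(0,1)$, Jensen's inequality together with the constraint $-\tfrac12\int\log(1-\phi^2(f))\,\mathrm{d}f\ge C_1$ shows that the minimum is attained precisely when $\phi^2(f)$ is frequency-flat, which is exactly $P^\star_{\rv{X}\rv{Y}}$. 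This reliance on concavity in the presence of the non-convex log-determinant objective of \autoref{lemma:pf_vector_gaussian} is the main technical obstacle, and it is what makes the constant-SNR spectrum ``worst-case'' from the adversary's viewpoint.

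Combining the two one-sided inequalities establishes the saddle-point property, so by \autoref{lemma:optimality_saddle_point} the value of the $\sup\inf$ in \eqref{eq:normalized_comib_function} equals the value of $\mathcal{I}(\rv{X};\rv{Z})$ at $(P^\star_{\rv{X}\rv{Y}},P^\star_{\rv{Z}|\rv{Y}})$. The linear representation of the channels in the theorem statement follows by construction, the whiteness of the double-sided SNR spectrum is a consequence of the saddle-point structure, and the closed-form rate $\mathcal{R}^{comib}(C_1,C_2)=-\tfrac12\log[1-(1-2^{-2C_1})(1-2^{-2C_2})]$ follows from the cascade computation in the first paragraph upon passing to the limit $n\to\infty$.
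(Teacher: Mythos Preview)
Your overall architecture coincides with the paper's: establish a saddle point for \eqref{eq:normalized_comib_function} by fixing one player to the white-SNR choice and showing the other cannot improve, then invoke \autoref{lemma:optimality_saddle_point}. The $\sup$ side (fix $P^\star_{\rv{X}\rv{Y}}$ with flat $\Gamma(f)\equiv\gamma$ and apply \autoref{theorem:ib_gaussian_process}) is exactly the paper's upper bound.

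The gap is on the $\inf$ side. Your claim that ``$\phi\mapsto -\log(1-\phi c)$ is concave'' is false: the second derivative is $c^2/(1-\phi c)^2>0$, so the map is convex. With the sign corrected, Jensen still does not close the argument, because it bounds the objective in terms of the mean of $\phi^2$, whereas the constraint fixes the mean of $-\log(1-\phi^2)$, itself a nonlinear convex function of $\phi^2$. If you reparametrize by $v(f)=-\log(1-\phi^2(f))$ so the constraint becomes linear, the objective $F(v)=-\log\bigl(1-c(1-e^{-v})\bigr)$ satisfies $F''(v)=-c(1-c)e^{-v}/(1-c+ce^{-v})^2<0$, i.e.\ it is \emph{concave} in $v$, and Jensen then points the wrong way for showing that the flat spectrum is the \emph{minimizer}. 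The paper does not attempt a Jensen argument here: it passes through the vector PF reduction of \autoref{lemma:pf_vector_gaussian}, specializes to $\Psi=\psi I$ so that the unitary freedom disappears, forms the Lagrangian in $\zeta_i=\phi_i^2$, and reads off from the stationarity condition $\psi^2/(1-\psi^2\zeta_i)=\mu/(1-\zeta_i)$ that all $\zeta_i$ coincide. You can run the same pointwise Lagrangian computation in the frequency domain if you prefer to stay there, but the convexity shortcut as you wrote it does not deliver the conclusion.
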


This theorem is obtained by applying a saddle-point property from \autoref{lemma:optimality_saddle_point} on \autoref{lemma:ib_jointly_gaussian_vectors} and \autoref{lemma:pf_vector_gaussian} and has the following practical implication:  The most robust approach in case there is no information regarding the structure of the observed signal and noise is to assume the input is white.

	\section{Concluding Remarks}
	In this paper, we have addressed the jointly Gaussian process \acrshort{ib} problem. A water-filling type solution has been obtained. Then, a linear prediction scheme that attains the \acrshort{ib} was proposed and analyzed. Finally, a closed-form solution has been given to a compound version of the  \acrshort{ib} for Gaussian processes.

Future research directly related to the results of this paper calls for further investigating single-letter quantization algorithms with information-theoretic metrics. In addition, it would be interesting to consider the \acrshort{ib} and \acrshort{pf} problems, when the constraint is not $I(\rv{Y};\rv{Z}) \leq C$, but $H(\rv{Z}) \leq C $, as also done in \cite{Homri2018}. Then the transformation $\rv{Y} \rightarrow \rv{Z}$ is known to be discrete.

	\section*{Acknowledgment}
	The work has been supported by the 
	European Union's Horizon 2020 Research And Innovation Programme,
	grant agreement no. 694630, by the ISF under Grant 1791/17, and by the WIN consortium via the
	Israel minister of economy and science.

\ifthenelse{\boolean{fullver}}{
\newpage
\allowdisplaybreaks
\appendix

	\subsection{Information Bottleneck for jointly Gaussian $ (\rv{X},\rv{Y}) $}
	The following auxiliary result is well known and mentioned here for self-sustainability. For alternative variant of this result, the interested reader is referred to \cite{Guo2013}. The uniqueness of the result and the proof provided here is the specific application of the \acrshort{epi} on the \acrshort{ib} problem (rather than I-MMSE on \acrshort{infcomb} as in \cite{Guo2013}).
\begin{lemma} \label{lemma:comib_scalar_gaussian_ib}
    Suppose that $\rv{X} \rightarrow \rv{Y} \rightarrow \rv{Z}$ constitute a Markov chain, where $\rv{X}$ and $\rv{Y}$ are unit variance jointly Gaussian random variables with correlation $\rho_1$. 
    Then, it holds that the value of the \acrshort{ib} program is \cite{Tishby1999}
    \begin{align}
        R^{IB} (C_2) &=
         \max_{\pmf{\bv{Z}|\bv{Y}} \colon I(\rv{Y};\rv{Z}) = C_2}
         I(\rv{X};\rv{Z}) \\
        &=\frac{1}{2} \log \frac{1}{1-\rho_1^2 \rho_2^2},
    \end{align}
    where $\rho_2^2 = 1-2^{-2C_2}$,
    and the optimizing distribution is a jointly Gaussian triplet $(\rv{X},\rv{Y},\rv{Z})$ with covariance matrix
    \begin{equation}
        \begin{pmatrix}
            1 & \rho_1 & \rho_1 \rho_2 \\
            \rho_1 & 1 & \rho_2 \\
            \rho_1 \rho_2 & \rho_2 & 1
        \end{pmatrix}.
    \end{equation}
\end{lemma}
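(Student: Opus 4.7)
The plan is to establish the upper bound $I(\rv{X};\rv{Z}) \leq \frac{1}{2}\log \frac{1}{1-\rho_1^2\rho_2^2}$ via the conditional \acrshort{epi}, and then verify that the jointly Gaussian triplet with the stated covariance matrix achieves it with equality, yielding a tight characterization.

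First, I would write $I(\rv{X};\rv{Z}) = h(\rv{X}) - h(\rv{X}|\rv{Z}) = \frac{1}{2}\log(2\pi e) - h(\rv{X}|\rv{Z})$, so the problem reduces to lower-bounding $h(\rv{X}|\rv{Z})$. Using the jointly Gaussian structure of $(\rv{X},\rv{Y})$ with unit variances and correlation $\rho_1$, I decompose $\rv{X} = \rho_1 \rv{Y} + \sqrt{1-\rho_1^2}\,\rv{N}$, where $\rv{N}\sim\mathcal{N}(0,1)$ is independent of $\rv{Y}$. By the Markov hypothesis $\rv{X}\to\rv{Y}\to\rv{Z}$, the innovation $\rv{N}$ is in fact independent of $\rv{Z}$ as well; this is the structural observation that makes the conditional \acrshort{epi} applicable in a way that tightly matches Gaussian achievability.

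Applying conditional \acrshort{epi} to the decomposition, exploiting $\rv{N}\perp\rv{Z}$ and the Gaussianity of $\rv{N}$, yields
\begin{equation}
2^{2h(\rv{X}|\rv{Z})} \geq \rho_1^2\cdot 2^{2h(\rv{Y}|\rv{Z})} + (1-\rho_1^2)\cdot 2\pi e.
\end{equation}
The \acrshort{ib} constraint $I(\rv{Y};\rv{Z}) = C_2$ combined with $h(\rv{Y}) = \frac{1}{2}\log(2\pi e)$ gives $2^{2h(\rv{Y}|\rv{Z})} = 2\pi e\cdot 2^{-2C_2}$. Substituting and using $\rho_2^2 = 1-2^{-2C_2}$ collapses the right-hand side to $2\pi e\,(1-\rho_1^2\rho_2^2)$, which directly produces the claimed upper bound on $I(\rv{X};\rv{Z})$.

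For achievability, I would check directly that the jointly Gaussian triplet with the stated covariance matrix is admissible and attains the bound. The off-diagonal partial covariance evaluates to $\rho_1\rho_2 - \rho_1\cdot\rho_2 = 0$, confirming the Markov chain $\rv{X}\to\rv{Y}\to\rv{Z}$; the correlation $\rho_2$ between $\rv{Y}$ and $\rv{Z}$ yields $I(\rv{Y};\rv{Z}) = -\frac{1}{2}\log(1-\rho_2^2) = C_2$; and the correlation $\rho_1\rho_2$ between $\rv{X}$ and $\rv{Z}$ yields $I(\rv{X};\rv{Z}) = -\frac{1}{2}\log(1-\rho_1^2\rho_2^2)$, matching the upper bound. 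The one delicate step is the conditional \acrshort{epi} itself: once one recognizes that the Markov assumption implies $\rv{N}\perp\rv{Z}$ (rather than merely $\rv{N}\perp\rv{Y}$), the inequality becomes tight exactly in the Gaussian regime, which is precisely why the construction achieves the optimum.
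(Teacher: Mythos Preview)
Your proposal is correct and follows essentially the same approach as the paper: decompose $\rv{X}=\rho_1\rv{Y}+\sqrt{1-\rho_1^2}\,\rv{N}$, apply the conditional \acrshort{epi} to lower-bound $h(\rv{X}\mid\rv{Z})$, substitute $h(\rv{Y}\mid\rv{Z})=h(\rv{Y})-C_2$, and then verify achievability with the jointly Gaussian triplet. You are in fact slightly more careful than the paper in making explicit that the Markov chain forces $\rv{N}\perp(\rv{Y},\rv{Z})$, which is exactly what licenses dropping the conditioning in $h(\rv{N}\mid\rv{Z})$ and what guarantees the conditional independence needed for the \acrshort{epi} step.
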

\begin{proof}
The main tools used in  proof of the lemma are the \acrfull{epi} \cite{Dembo1991}, and the scaling property of the differential entropy function \cite{Cover2006}. For any two zero mean unit variance jointly Gaussian random variables $ \rv{X} $ and $ \rv{Y} $, with correlation $ \rho_1 $, there exists a unit variance Gaussian random variable $\rv{W}$, independent of $\rv{Y}$, i.e.,
 \begin{equation}
 	\rv{X} = \rho_1 \rv{Y} + \sqrt{1-\rho_1^2} \cdot \rv{W}.
  \end{equation}
 The objective function has the following form in the Gaussian case:
\begin{align}
    I(\rv{X};\rv{Z})
    &= h(\rv{X}) - h(\rv{X}|\rv{Z}) \\
    &= \frac{1}{2} \log 2\pi e - h \left( \rho_1 \rv{Y} + \sqrt{1-\rho_1^2} \cdot  \rv{W} \bigg| \rv{Z} \right).
\end{align}
The last term can be bounded from below using \acrshort{epi},
\begin{align}
    &h \left( \rho_1 \rv{Y} + \sqrt{1-\rho_1^2} \cdot \rv{W} \bigg| \rv{Z} \right) \\
    &\geq \frac{1}{2} \log \left( 2^{2 h(\rho_1 \rv{Y}  | \rv{Z})} + 2^{2 h\left(\sqrt{1-\rho_1^2} \cdot \rv{W}  | \rv{Z}\right)} \right) \\
    &= \frac{1}{2} \log \left( \rho_1^2 \cdot 2^{2 h( \rv{Y}  | \rv{Z})} +(1-\rho_1^2) \cdot 2^{2 h( \rv{W}  )} \right) \\
    &\eqann{=}{a} \frac{1}{2} \log \left( \rho_1^2 \cdot 2^{2(h(\rv{Y}) -  I( \rv{Y}  ; \rv{Z}))} +(1-\rho_1^2) 2\pi e \right) \\
    &\eqann{=}{b} \frac{1}{2} \log 2\pi e \left( \rho_1^2 2^{- 2C_2} + 1-\rho_1^2 \right),
\end{align}
where \eqannref{a} and \eqannref{b} follow since $h(\rv{Y}) = h(\rv{W}) =  \frac{1}{2} \log 2\pi e$.
Thus,
\begin{equation} \label{eq:ib_gaussian_ub}
    I(\rv{X};\rv{Z}) \leq \frac{1}{2} \log \frac{1}{1- \rho_1^2(1-2^{-2C_2})}.
\end{equation}
Since the inequality in \eqref{eq:ib_gaussian_ub} follows from \acrshort{epi}, it can be attained with equality if we choose $\rv{Z} \sim \mathcal{N}(0,1)$, $\rv{V} \sim \mathcal{N}(0,1)$ and
\begin{equation}
    \rv{Y} = \rho_2 \rv{Z} + \sqrt{1-\rho_2^2} \rv{V},
\end{equation}
such that
\begin{equation}
    \rv{X} = \rho_1 \rho_2 \rv{Z} + \rho_1 \sqrt{1-\rho_2^2} \rv{V} + \sqrt{1-\rho_1^2} \rv{W},
\end{equation}
with
    $\rho_2^2 = 1-2^{-2C_2}$. This completes the proof of the lemma.
\end{proof}
	
	\subsection{Equivalence  of Covariance Measures}
	Let $ (\bv{X},\bv{Y}) $ be a pair of jointly Gaussian random vectors with marginal covariance matrices $ \Sigma_{\bv{X}} $, $ \Sigma_{\bv{Y}} $, and cross-covariance matrix $ \Sigma_{\bv{X}\bv{Y}} $. Without loss of generality, they can be represented using a linear form as
\begin{equation}
	\bv{Y} = K_{yx} \bv{X} + \bv{W},
\end{equation}
where $ K_{yx} = \Sigma_{\bv{X}\bv{Y}}^T \Sigma_{\bv{X}}^{-1} $, and $ \Sigma_{\bv{W}} = \Sigma_{\bv{Y}} - \Sigma_{\bv{X}\bv{Y}}^T \Sigma_{\bv{X}}^{-1} \Sigma_{\bv{X}\bv{Y}} $.  

We define and compare two linear measures:
\begin{itemize}
	\item Normalized Correlation Matrix:
	\begin{equation}
		\Sigma_{cor} \triangleq \Sigma_{\bv{X}}^{-\frac{1}{2}}  \Sigma_{\bv{X}\bv{Y}} \Sigma_{\bv{Y}}^{-\frac{1}{2}}.
	\end{equation}
	\item \acrshort{snr} Matrix:
	\begin{equation}
		\Sigma_{snr} \triangleq \Sigma_{\bv{W}}^{-\frac{1}{2}} K_{yx} \Sigma_{\bv{X}} K_{yx}^T \Sigma_{\bv{W}}^{-\frac{1}{2}}.
	\end{equation}
\end{itemize}

By the \acrfull{svd} theorem \cite{Horn2012}, there exist unitary matrices $U$ and $V$ and a diagonal matrix $D$ such that 
\begin{equation}
	\Sigma_{\bv{X}}^{-\frac{1}{2}}  \Sigma_{\bv{X}\bv{Y}} \Sigma_{\bv{Y}}^{-\frac{1}{2}}  = U D V^T.
\end{equation}
Similarly, there exist unitary matrix $ O $ and diagonal matrix $ \Gamma $ such that
\begin{equation}
	\Sigma_{snr} = O \Gamma O^T.
\end{equation}
\begin{lemma} \label{lemma:equivalence_of_measures}
	There is a one-to-one correspondence between the elements of $ D $ and $ \Gamma $, i.e.,
	\begin{equation}
		\gamma_i \leftrightarrow \frac{d_i^2}{1-d_i^2}.
	\end{equation}
\end{lemma}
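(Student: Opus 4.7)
The plan is to reduce the eigenvalues of $\Sigma_{snr}$ to the singular values of $\Sigma_{cor}$ by a direct similarity-transformation argument. Introducing $A \triangleq \Sigma_{cor} = \Sigma_{\bv{X}}^{-1/2}\Sigma_{\bv{X}\bv{Y}}\Sigma_{\bv{Y}}^{-1/2}$, I would first rewrite the cross-covariance as $\Sigma_{\bv{X}\bv{Y}} = \Sigma_{\bv{X}}^{1/2} A\, \Sigma_{\bv{Y}}^{1/2}$ and substitute this identity into the two building blocks of $\Sigma_{snr}$.

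The first step then yields
\begin{equation}
K_{yx}\Sigma_{\bv{X}} K_{yx}^T = \Sigma_{\bv{X}\bv{Y}}^T \Sigma_{\bv{X}}^{-1}\Sigma_{\bv{X}\bv{Y}} = \Sigma_{\bv{Y}}^{1/2} A^T A\, \Sigma_{\bv{Y}}^{1/2},
\end{equation}
and analogously
\begin{equation}
\Sigma_{\bv{W}} = \Sigma_{\bv{Y}}^{1/2}\bigl(I - A^T A\bigr)\Sigma_{\bv{Y}}^{1/2}.
\end{equation}
Plugging both into the definition of $\Sigma_{snr}$, the matrix becomes conjugate (through $\Sigma_{\bv{Y}}^{1/2}\Sigma_{\bv{W}}^{-1}\Sigma_{\bv{Y}}^{1/2}$) to $A^T A(I-A^T A)^{-1}$. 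This is the central observation.

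From here I would invoke the standard fact that $\Sigma_{snr}$ and the product $K_{yx}\Sigma_{\bv{X}} K_{yx}^T\,\Sigma_{\bv{W}}^{-1}$ share the same spectrum (eigenvalues of $BC$ coincide with those of $CB$), and conclude that the eigenvalues of $\Sigma_{snr}$ coincide with those of $A^T A(I-A^T A)^{-1}$. Since $A = UDV^T$ gives $A^T A = V D^2 V^T$, the matrix $A^T A(I-A^T A)^{-1}$ is diagonalized by $V$ with eigenvalues $d_i^2/(1-d_i^2)$, which establishes the claimed bijection $\gamma_i \leftrightarrow d_i^2/(1-d_i^2)$.

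I do not anticipate a genuine obstacle here: the argument is a sequence of substitutions plus one spectral similarity. The only point requiring minor care is that $I - A^T A$ must be invertible, which is ensured by the positive-definiteness of $\Sigma_{\bv{W}}$ (equivalently, $\|A\|<1$), and that the symmetric square roots $\Sigma_{\bv{Y}}^{1/2}$ and $\Sigma_{\bv{W}}^{1/2}$ commute appropriately through the rewriting; both are routine given the assumed positive-definiteness of the marginal covariances.
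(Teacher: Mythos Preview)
Your proposal is correct and follows essentially the same path as the paper: both derive
\[
K_{yx}\Sigma_{\bv{X}}K_{yx}^T=\Sigma_{\bv{Y}}^{1/2}VD^2V^T\Sigma_{\bv{Y}}^{1/2},
\qquad
\Sigma_{\bv{W}}=\Sigma_{\bv{Y}}^{1/2}V(I-D^2)V^T\Sigma_{\bv{Y}}^{1/2},
\]
from the substitution $\Sigma_{\bv{X}\bv{Y}}=\Sigma_{\bv{X}}^{1/2}A\,\Sigma_{\bv{Y}}^{1/2}$ with $A=UDV^T$. The only real difference is in the closing step: the paper compares determinants, obtaining $\prod_i\gamma_i=\prod_i d_i^2/(1-d_i^2)$, and then asserts the elementwise correspondence, whereas you use the $BC$--$CB$ spectral identity to show $\Sigma_{snr}$ is similar to $\Sigma_{\bv{Y}}^{1/2}\bigl[A^TA(I-A^TA)^{-1}\bigr]\Sigma_{\bv{Y}}^{-1/2}$ and hence has exactly the eigenvalues $d_i^2/(1-d_i^2)$. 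Your route is in fact tighter, since equality of determinants alone does not force a bijection of the individual eigenvalues; the similarity argument delivers the full eigenvalue list directly. One small remark: the parenthetical ``conjugate through $\Sigma_{\bv{Y}}^{1/2}\Sigma_{\bv{W}}^{-1}\Sigma_{\bv{Y}}^{1/2}$'' and the worry about $\Sigma_{\bv{Y}}^{1/2}$ and $\Sigma_{\bv{W}}^{1/2}$ commuting are slightly off---they need not commute, and the conjugation that actually appears after the $BC$--$CB$ swap is by $\Sigma_{\bv{Y}}^{1/2}$ alone---but your argument as written never uses commutativity, so this is cosmetic.
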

\begin{proof}
	Note that
	\begin{align}
		&\Sigma_{\bv{W}} 
		= \Sigma_{\bv{Y}} - \Sigma_{\bv{X}\bv{Y}}^T \Sigma_{\bv{X}}^{- 1} \Sigma_{\bv{X}\bv{Y}} \\
		&= \Sigma_{\bv{Y}}^{\frac{1}{2}}(I \minus \Sigma_{\bv{Y}}^{\minus \frac{1}{2}} \Sigma_{\bv{X}\bv{Y}}^T \Sigma_{\bv{X}}^{\minus \frac{1}{2}} \Sigma_{\bv{X}}^{\minus \frac{1}{2}} \Sigma_{\bv{X}\bv{Y}} \Sigma_{\bv{Y}}^{\minus \frac{1}{2}}) \Sigma_{\bv{Y}}^{\frac{1}{2}} \\
		&= \Sigma_{\bv{Y}}^{\frac{1}{2}}(I - V D U^T U D V^T) \Sigma_{\bv{Y}}^{\frac{1}{2}} \\
		&= \Sigma_{\bv{Y}}^{\frac{1}{2}} V (I -  D^2 )  V^T \Sigma_{\bv{Y}}^{\frac{1}{2}}.
	\end{align}
	Thus,
	\begin{equation}
		\Sigma_{\bv{W}} ^{-1} = \Sigma_{\bv{Y}}^{-\frac{1}{2}} V (I -  D^2 )^{-1}  V^T \Sigma_{\bv{Y}}^{-\frac{1}{2}}.
	\end{equation}
	Also note that
	\begin{align}
		&K_{yx} \Sigma_{\bv{X}} K_{yx}^T
		= \Sigma_{\bv{X}\bv{Y}}^T \Sigma_{\bv{X}}^{-1} \Sigma_{\bv{X}\bv{Y}} \\
		&= \Sigma_{\bv{Y}}^{\frac{1}{2}} \Sigma_{\bv{Y}}^{-\frac{1}{2}}\Sigma_{\bv{X}\bv{Y}}^T \Sigma_{\bv{X}}^{-\frac{1}{2}}\Sigma_{\bv{X}}^{-\frac{1}{2}} \Sigma_{\bv{X}\bv{Y}} \Sigma_{\bv{Y}}^{-\frac{1}{2}} \Sigma_{\bv{Y}}^{\frac{1}{2}} \\
		&= \Sigma_{\bv{Y}}^{\frac{1}{2}} V D U^T U D V^T \Sigma_{\bv{Y}}^{\frac{1}{2}} \\
		&= \Sigma_{\bv{Y}}^{\frac{1}{2}} V D^2  V^T \Sigma_{\bv{Y}}^{\frac{1}{2}} .
	\end{align}
	Finally, consider the determinant of the \acrshort{svd} form of the \acrshort{snr} matrix. We have
	\begin{align}
		\prod_{i=1}^{n} \gamma_i 
		&= |\Gamma| \\
		&= | O O^T| \cdot | \Gamma | \\
		&= | O  \Gamma  O^T | \\
		&= | \Sigma_{\bv{W}}^{-\frac{1}{2}} K_{yx} \Sigma_{\bv{X}} K_{yx}^T \Sigma_{\bv{W}}^{-\frac{1}{2}}| \\
		&= | \Sigma_{\bv{W}}^{-1} | \cdot | K_{yx} \Sigma_{\bv{X}} K_{yx}^T | \\
		&= |I - D^2|^{-1} \cdot | \Sigma_{\bv{Y}}^{-1}|  \cdot | \Sigma_{\bv{Y}}| |D^2| \\
		&= \prod_{i=1}^n \frac{d_i^2}{1-d_i^2}.
	\end{align}
	This implies that there exists a one-to-one and onto mapping between $\{\gamma_i\}$ and $d_i$.
\end{proof}

	\subsection{Proof of  \autoref{lemma:ib_jointly_gaussian_vectors}}
Suppose $\bv{X}$ and $\bv{Y}$ are jointly Gaussian  random vectors with covariance matrix $\Sigma_{\bv{X}\bv{Y}}$. It is easy to verify that without loss of generality, we can write
\begin{equation}
    \bv{X} = K_{xy} \bv{Y} + \bv{W},
\end{equation}
where $K_{xy} = \Sigma_{\bv{X}\bv{Y}} \Sigma_{\bv{Y}}^{-1} $  and $\Sigma_{\bv{W}} = \Sigma_{\bv{X}} - \Sigma_{\bv{X}\bv{Y}} \Sigma_{\bv{Y}}^{-1} \Sigma_{\bv{X}\bv{Y}}^T$.

By the \acrfull{svd} theorem \cite{Horn2012}, there exist unitary matrices $U$ and $V$ and a diagonal matrix $D$ such that 
\begin{equation}
    \Sigma_{\bv{X}}^{-\frac{1}{2}}  \Sigma_{\bv{X}\bv{Y}} \Sigma_{\bv{Y}}^{-\frac{1}{2}}  = U D V^T.
\end{equation}
Denote $T_x \triangleq U^T \Sigma_{\bv{X}}^{-\frac{1}{2}}$,  and let $\tilde{\bv{X}} \triangleq T_x \bv{X}$.  It follows that $ \Sigma_{\tilde{\bv{X}}}  = I$,
implying $\tilde{\bv{X}}$ is a random Gaussian vector with independent unit variance entries. Similarly defining $\tilde{\bv{Y}} = T_y \bv{Y}$ where $T_y = V^T \Sigma_{\bv{Y}}^{-\frac{1}{2}}$, results in $\tilde{\bv{Y}}$ being a random Gaussian vector with unit-variance independent entries, i.e, $\Sigma_{\tilde{\bv{Y}}}  = I$. 
Moreover, note that,
\begin{align}
    \tilde{\bv{X}} 
    &= T_x \bv{X} \\
    &= U^T \Sigma_{\bv{X}}^{-\frac{1}{2}} \left( K_{xy} \bv{Y} + \bv{W} \right) \\
    &= U^T \Sigma_{\bv{X}}^{-\frac{1}{2}} \Sigma_{\bv{X}\bv{Y}} \Sigma_{\bv{Y}}^{-\frac{1}{2}} V V^T \Sigma_{\bv{Y}}^{-\frac{1}{2}} \bv{Y} + U^T \Sigma_{\bv{X}}^{-\frac{1}{2}} \bv{W}  \\
    &= U^T U D V^T V \tilde{\bv{Y}} + U^T \Sigma_{\bv{X}}^{-\frac{1}{2}} \bv{W}  \\
    &= D \tilde{\bv{Y}} + U^T \Sigma_{\bv{X}}^{-\frac{1}{2}} \bv{W}.
\end{align}
Defining $\tilde{\bv{W}} \triangleq U^T \Sigma_{\bv{X}}^{-\frac{1}{2}} \bv{W}$, we obtain
\begin{align}
    \Sigma_{\tilde{\bv{W}}}
    &= \Exp{\tilde{\bv{W}}\tilde{\bv{W}}^T}\\
    &= U^T \Sigma_{\bv{X}}^{-\frac{1}{2}} \Sigma_{\bv{W}} \Sigma_{\bv{X}}^{-\frac{1}{2}} U \\
    &= U^T \Sigma_{\bv{X}}^{-\frac{1}{2}} \left( \Sigma_{\bv{X}} - \Sigma_{\bv{X}\bv{Y}} \Sigma_{\bv{Y}}^{-1} \Sigma_{\bv{X}\bv{Y}}^T
    \right)
    \Sigma_{\bv{X}}^{-\frac{1}{2}} U \\
    &= I - D^2.
\end{align}
Thus, the transformed channel is a parallel channel, satisfying
\begin{equation} \label{eq:ib_vector_gaussian_proof_parallel_channel}
    \tilde{\rv{X}}_i = d_i \tilde{\rv{Y}}_i +\sqrt{1-d_i^2} \tilde{\rv{W}}_i.
\end{equation}

Consider the mutual information constraint on the pair $(\bv{Y},\rv{Z})$. Since the linear transform $T_y$ has full rank, $\bv{Y}$ and $\tilde{\bv{Y}}$ are a bijection, and there is no loss of information, i.e.,
\begin{align}
    I(\bv{Y};\rv{Z})
    &= I(\tilde{\bv{Y}};\rv{Z}) \\
    &= h(\tilde{\bv{Y}}) - h(\tilde{\bv{Y}}|\rv{Z}) \\
    &= \sum_{i=1}^N h(\tilde{\rv{Y}}_i) - h(\tilde{\rv{Y}}_i|\tilde{\rv{Y}}^{i-1}, \rv{Z}).
\end{align}
Identifying $\rv{Z}_i \triangleq (\rv{Z},\tilde{\rv{Y}}^{i-1})$ and denoting
$C_{i} = I(\tilde{\rv{Y}}_i; \rv{Z}_i) $ we obtain the following representation of the bottleneck constraint
\begin{equation}
    \sum_{i=1}^{n_y} C_{i}  \leq C, \qquad C_{i} = I(\tilde{\rv{Y}}_i; \rv{Z}_i).
\end{equation}

Now consider the objective function. Similarly, since $ T_x $ is invertible, the linear mapping from $\tilde{\bv{X}}$ to $\bv{X}$ is a bijection, there is no loss of information,
\begin{align}
    I(\bv{X};\rv{Z})
    &= I(\tilde{\bv{X}};\rv{Z}) \\
    &= h(\tilde{\bv{X}}) - h(\tilde{\bv{X}}|\rv{Z}) \\
    &= \sum_{i=1}^n h(\tilde{\rv{X}}_i) - h(\tilde{\rv{X}}_i|\tilde{\rv{X}}^{i-1}, \rv{Z})\\
    &\eqann{\leq}{a} \sum_{i=1}^n h(\tilde{\rv{X}}_i) - h(\tilde{\rv{X}}_i|\tilde{\rv{X}}^{i-1},\tilde{\rv{Y}}^{i-1}, \rv{Z}) \\
    &\eqann{=}{b} \sum_{i=1}^n h(\tilde{\rv{X}}_i) - h(\tilde{\rv{X}}_i|\tilde{\rv{Y}}^{i-1}, \rv{Z}) \\
    &= \sum_{i=1}^n h(\tilde{\rv{X}}_i) - h(\tilde{\rv{X}}_i| \rv{Z}_i) \\
    &= \sum_{i=1}^n I(\tilde{\rv{X}}_i;\rv{Z}_i),
\end{align}
where \eqannref{a} follows since conditioning reduces differential entropy, and equality in \eqannref{b} is due to Markov chain $\tilde{\rv{X}}_i \rightarrow (\tilde{\rv{Y}}^{i-1},\rv{Z}) \rightarrow \tilde{\rv{X}}^{i-1}$.
Further, by \autoref{lemma:comib_scalar_gaussian_ib}, and the parallel channel construction from \eqref{eq:ib_vector_gaussian_proof_parallel_channel}, we have
\begin{equation}
     I(\tilde{\rv{X}}_i;\rv{Z}_i)
     \leq \frac{1}{2} \log \frac{1}{1-d_i^2 (1-2^{-2C_{i}})},
\end{equation}
with equality achieved for $\rv{Z}_i \sim \mathcal{N}(0,1) $, $\rv{V}_i \sim \mathcal{N}(0,1)$ and
\begin{equation}
    \tilde{\rv{Y}}_i = \rho \rv{Z}_i + \sqrt{1-\rho^2} \rv{V}_i,
\end{equation}
where $\rho^2 = 1-2^{-2C_{i}}$.
Thus, we have relaxed our original optimization problem in \eqref{eq:normalized_ib_function} to the following one:
\begin{equation}
    \begin{aligned}
    &R_n^{ib}(C) =
    && \underset{\{C_{i} \}}{\text{maximize}}
    && \frac{1}{2 n} \sum_{i=1}^n  \log \frac{1}{1-d_i^2 (1-2^{-2C_{i}})} \\
    &&& \text{subject to}
    &&  \frac{1}{n}\sum_{i=1}^n C_{i} \leq C.
    \end{aligned}
\end{equation}
We apply KKT conditions to solve the underlying optimization problem. The respective Lagrangian is given by
\begin{multline}
    L(\{C_{i}\},\lambda) - \lambda C \\
    = \frac{1}{n}\sum_{i=1}^n \frac{1}{2} \log \frac{1}{1-d_i^2 (1-e^{-2C_{i}})} -  \frac{\lambda}{n}
    \sum_{i=1}^n C_{i} 
    .
\end{multline}
The KKT conditions are given by:
\begin{itemize}
    \item Stationarity:
    \begin{equation}
        \frac{\partial L}{\partial C_{i}} 
        = \frac{1}{n} \cdot  \frac{d_i^2 e^{-2C_{i}}}
        {1-d_i^2(1-e^{-2C_{i}})} -\frac{\lambda}{n} = 0.
    \end{equation}
    Thus,
    \begin{equation}
        (1-d_i^2) e^{2C_{i}} +d_i^2 = \frac{d_i^2}{\lambda} \rightarrow C_{i} = \frac{1}{2} \ln \frac{d_i^2 (1 -\lambda) }{ \lambda(1-d_i^2)}.
    \end{equation}
    \item Complementary Slackness:
    \begin{equation}
        \lambda \left(
    \frac{1}{n} \sum_{i=1}^n C_{i} - C
    \right) = 0.
    \end{equation}
    \item Constraints:
    \begin{align}
        C_{i} &\geq 0, \\
        \frac{1}{n}\sum_{i=1}^{n} C_{i} &\leq C.
        \label{eq:sumC_connnstraint}
    \end{align}
\end{itemize}
Since $\lambda = 0$ is infeasible solution, therefore $\lambda^* > 0 $ and the last constraint must be satisfied with equality. Thus $\lambda$ is chosen as the solution to
\begin{equation}
   \frac{1}{n} \sum_{i=1}^n C_{i} = C.
\end{equation}
Further denote $\nu = \frac{1}{\lambda}$, the optimal solution has the following water-filling form:
\begin{equation}
    C_{i} =  \frac{1}{2}\log\left[ \frac{d_i^2 (\nu -1)}{1-d_i^2} \right]^+ .
\end{equation}
Thus,
\begin{equation}
    R_n^{ib}(C) =  -\frac{1}{2 n} 
    \sum_{i=1}^n
    \log 
    \left[1-d_i^2\left(1 -
    \left[ \frac{1-d_i^2}{d_i^2 (\nu -1)} \right]^-
    \right)
    \right],
\end{equation}
where $\nu$ is chosen to satisfy the rate constraint \eqref{eq:sumC_connnstraint} with equality.

Thus, the maximum is achieved by a jointly Gaussian triple $(\bv{X},\bv{Y},\rv{Z})$ and
\begin{equation}
	R_n^{ib}(C) 
	=
	\frac{1}{2n} \sum_{i=1}^n
	\log
	\left[
	\frac{\nu-1}{\nu (1-d_i^2)}
	\right]^+,
\end{equation}
where the water-filling level $\nu$ is chosen such that
\begin{equation}
	\frac{1}{2n}\sum_{i=1}^n  \log\left[ \frac{d_i^2 (\nu -1)}{1-d_i^2} \right]^+ = C.
\end{equation}

Note that $\{d_i\}$ are eigenvalues of $ \Sigma_{\bv{X}}^{-\frac{1}{2}}  \Sigma_{\bv{X}\bv{Y}} \Sigma_{\bv{Y}}^{-\frac{1}{2}}  $, which is a normalized correlation type measure. We would like to further represent the eigenvalues with a different intuitive form.
We start by writing the channel from $\bv{X} $ to $\bv{Y}$ in the following form:
\begin{equation}
	\bv{Y} = K_{yx} \bv{X} + \tilde{\bv{W}},
\end{equation}
where $K_{yx} = \Sigma_{\bv{X}\bv{Y}}^T \Sigma_{\bv{X}}^{-1} $  and $\Sigma_{\tilde{\bv{W}}} = \Sigma_{\bv{Y}} - \Sigma_{\bv{X}\bv{Y}}^T \Sigma_{\bv{X}}^{-1} \Sigma_{\bv{X}\bv{Y}}$.  Let $O \Gamma O^T$ be the \acrshort{svd} of the corresponding signal to noise matrix defined by
\begin{equation}
	O \Gamma O^T \triangleq \Sigma_{\tilde{\bv{W}}}^{-1/2} K_{yx} \Sigma_{\bv{X}} K_{yx}^T \Sigma_{\tilde{\bv{W}}}^{-1/2},
\end{equation}
where $ \Gamma = \diag \{\gamma_i\} $.
We have already shown in \autoref{lemma:equivalence_of_measures}, that a bijection holds between $ \gamma_i $ and $ \frac{d_i^2}{1-d_i^2} $, thus, the optimal vector Gaussian \acrshort{ib} can be expressed in the following elegant form:
\begin{equation}
	R_n^{ib}(C) 
	=
	\frac{1}{2n} \sum_{i=1}^n
	\log
	\left[
	\frac{\nu-1}{\nu } (1+\gamma_i)
	\right]^+,
\end{equation}
where the water-filling level $\nu$ is chosen such that
\begin{equation}
	\frac{1}{2n}\sum_{i=1}^n \log\left[ (\nu -1) \gamma_i \right]^+ = C.
\end{equation}
Further, by  defining 
\begin{equation}
	\theta \triangleq \frac{1}{\nu-1}
\end{equation}
we obtain the desired result.


	\subsection{Mutual Information Rate for Gaussian Random Processes}
	Let $ \{\rv{X}_t\}_{t=-\infty}^{+\infty} $ and $ \{\rv{Y}_t\}_{t=-\infty}^{+\infty} $ be a pair of discrete-time stationary jointly Gaussian stochastic processes with auto-correlation and cross-correlation functions
\begin{align}
	R_{\rv{X}}[k]  &\triangleq \Exp{\rv{X}_t \rv{X}_{t+k}}, \\
	R_{\rv{Y}}[k]  &\triangleq \Exp{\rv{Y}_t \rv{Y}_{t+k}}, \\
	R_{\rv{X} \rv{Y}}[k]  &\triangleq \Exp{\rv{X}_t \rv{Y}_{t+k}}, \\
	R_{\rv{Y}\rv{X}}[k]  &\triangleq \Exp{\rv{Y}_t \rv{X}_{t+k}} = R_{\rv{X}\rv{Y}}[-k].
\end{align}
The respective power spectral densities and cross-power spectral densities exist via the Fourier transform
\begin{align}
	S_{\rv{X}}(f) &= \sum_{k=- \infty}^{k = \infty} R_{\rv{X}}[k] e^{-j2\pi k f}, \\
	S_{\rv{Y}}(f) &= \sum_{k=- \infty}^{k = \infty} R_{\rv{Y}}[k] e^{-j2\pi k f}, \\
	S_{\rv{X}\rv{Y}}(f) &= \sum_{k=- \infty}^{k = \infty} R_{\rv{X} \rv{Y}}[k] e^{-j2\pi k f}, \\
	S_{\rv{Y} \rv{X}}(f) &= \sum_{k=- \infty}^{k = \infty} R_{\rv{Y}\rv{X}}[k] e^{-j2\pi k f} = S_{\rv{X}\rv{Y}}^*(f).
\end{align}
The corresponding \acrshort{snr} spectrum is defined by
\begin{equation}
	\Gamma(f )  \triangleq \frac{S_{\rv{X} \rv{Y}}(f) }{S_{\rv{X} }(f) S_{\rv{Y} }(f)  - S_{\rv{X} \rv{Y}}(f) }.
\end{equation}
The mutual-information rate is defined by  \cite{Komaee2020}
\begin{equation}
	\mathcal{I}(\{\rv{X}_t\}; \{\rv{Y}_t\}) = \lim_{n \rightarrow \infty} \frac{1}{n} I(\rv{X}_1,\dots, \rv{X}_n;\rv{Y}_1,\dots, \rv{Y}_n).
\end{equation}
\begin{lemma} \label{lemma:mutual_information_rate_gaussian_processes}
	The mutual information rate between two jointly Gaussian random processes with \acrshort{snr} spectrum $ \Gamma(f) $ is given by
	\begin{equation}
		\mathcal{I}(\{\rv{X}_t\}; \{\rv{Y}_t\}) = \frac{1}{2} \int_{-1/2}^{1/2} \log (1+\Gamma(f)) \mathrm{d} f.
	\end{equation}
\end{lemma}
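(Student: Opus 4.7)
The plan is to reduce the mutual information rate between the two jointly stationary Gaussian processes to an asymptotic normalized log-determinant of Toeplitz covariance matrices, and then evaluate that limit using Szeg\"o's theorem.

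First, for fixed $n$, I would write the jointly Gaussian random vectors $\bv{X}^n,\bv{Y}^n$ in their additive-noise linear form $\bv{Y}^n = H_n \bv{X}^n + \tilde{\bv{W}}^n$ with $H_n = \Sigma_{\bv{X}^n\bv{Y}^n}^T \Sigma_{\bv{X}^n}\inv$ and $\Sigma_{\tilde{\bv{W}}^n} = \Sigma_{\bv{Y}^n} - \Sigma_{\bv{X}^n\bv{Y}^n}^T \Sigma_{\bv{X}^n}\inv \Sigma_{\bv{X}^n\bv{Y}^n}$, exactly as was done in the introduction for processes. This representation makes the mutual information collapse into a log-determinant,
\begin{equation}
 I(\bv{X}^n;\bv{Y}^n) = \tfrac{1}{2} \log \frac{\det \Sigma_{\bv{Y}^n}}{\det \Sigma_{\tilde{\bv{W}}^n}},
\end{equation}
so that
\begin{equation}
 \tfrac{1}{n} I(\bv{X}^n;\bv{Y}^n) = \tfrac{1}{2n}\log\det\Sigma_{\bv{Y}^n} - \tfrac{1}{2n}\log\det\Sigma_{\tilde{\bv{W}}^n}.
\end{equation}

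Next, I would take $n\to\infty$ and apply Szeg\"o's theorem to each term separately. Since $\{\rv{Y}_t\}$ is stationary, $\Sigma_{\bv{Y}^n}$ is a symmetric Toeplitz matrix generated by $S_{\rv{Y}}(f)$, so $\tfrac{1}{n}\log\det\Sigma_{\bv{Y}^n} \to \int_{-1/2}^{1/2}\log S_{\rv{Y}}(f)\,\mathrm{d}f$. The subtle point is the second term: $\tilde{\bv{W}}^n$ is not exactly the length-$n$ block of a stationary process (it is the innovations vector truncated at $n$), but its covariance is asymptotically Toeplitz with generating spectrum $S_{\rv{W}}(f) = S_{\rv{Y}}(f) - |S_{\rv{X}\rv{Y}}(f)|^2/S_{\rv{X}}(f)$ given in the introduction. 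Using either the block-Toeplitz version of Szeg\"o's theorem applied to the joint covariance of $(\bv{X}^n,\bv{Y}^n)$ (and then subtracting the $\bv{X}^n$ contribution), or the asymptotic-equivalence arguments from \cite{Gray2006}, yields $\tfrac{1}{n}\log\det\Sigma_{\tilde{\bv{W}}^n} \to \int_{-1/2}^{1/2}\log S_{\rv{W}}(f)\,\mathrm{d}f$. Combining the two limits,
\begin{equation}
 \mathcal{I}(\{\rv{X}_t\};\{\rv{Y}_t\}) = \tfrac{1}{2}\int_{-1/2}^{1/2}\log\frac{S_{\rv{Y}}(f)}{S_{\rv{W}}(f)}\,\mathrm{d}f.
\end{equation}

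Finally, it remains to identify the integrand with $1+\Gamma(f)$. A direct calculation gives
\begin{equation}
 \frac{S_{\rv{Y}}(f)}{S_{\rv{W}}(f)} = \frac{S_{\rv{X}}(f)S_{\rv{Y}}(f)}{S_{\rv{X}}(f)S_{\rv{Y}}(f)-|S_{\rv{X}\rv{Y}}(f)|^2} = 1+\Gamma(f),
\end{equation}
matching the definition of the SNR spectrum in \autoref{theorem:ib_gaussian_process}, which completes the argument.

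The main obstacle, as noted above, is the rigorous passage to the limit for $\log\det\Sigma_{\tilde{\bv{W}}^n}$: one must handle the fact that the innovation covariance is not a clean Toeplitz truncation, but rather the Schur complement of the joint covariance. The cleanest route is probably to apply Szeg\"o's theorem to the block Toeplitz covariance of $(\bv{X}^n,\bv{Y}^n)$ with matrix-valued generating spectrum, and then use $\log\det\Sigma_{\text{joint}} = \log\det\Sigma_{\bv{X}^n} + \log\det\Sigma_{\tilde{\bv{W}}^n}$, which converts the problem entirely into standard determinantal limits and sidesteps any pathologies at spectral nulls (assuming the usual regularity conditions $\log S_{\rv{X}}, \log S_{\rv{W}} \in L^1$ are met).
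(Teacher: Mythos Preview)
Your proposal is correct and follows essentially the same route as the paper: write the finite-$n$ mutual information as a log-determinant via the linear additive-noise representation, then pass to the limit with Szeg\"o's theorem. The only cosmetic difference is that the paper bundles the ratio into the eigenvalues $\gamma_i$ of the SNR matrix $\Sigma_{\bv{W}}^{-1/2} H \Sigma_{\bv{Y}} H^T \Sigma_{\bv{W}}^{-1/2}$ and applies Szeg\"o once to $\tfrac{1}{n}\sum_i \log(1+\gamma_i)$, whereas you split $\tfrac{1}{n}\log\det\Sigma_{\bv{Y}^n}-\tfrac{1}{n}\log\det\Sigma_{\tilde{\bv{W}}^n}$ and apply Szeg\"o (or its block/asymptotic-equivalence variant) to each piece before recombining into $\log\tfrac{S_{\rv{Y}}}{S_{\rv{W}}}=\log(1+\Gamma)$; your version is actually more explicit about the Schur-complement subtlety that the paper leaves implicit.
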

\begin{proof}
Fix $ n $ and denote $ \bv{X} = \rv{X}_1^n $, $ \bv{Y} = \rv{Y}_1^n $.
Since $ \bv{X} $ and $ \bv{Y} $ are jointly Gaussian random vectors, they can be represented using a linear form,
\begin{equation}
	\bv{X} = H \bv{Y} + \bv{W},
\end{equation}
where $ H = \Sigma_{\bv{X}\bv{Y}} \Sigma_{\bv{Y}}^{-1}  $, and  $ \bv{W} \sim \mathcal{N}(0,\Sigma_{\bv{X}} - \Sigma_{\bv{X}\bv{Y}} \Sigma_{\bv{Y}}^{-1} \Sigma_{\bv{X}\bv{Y}}^T) $.
It follows that
\begin{align}
	I(\bv{X};\bv{Y}) 
	&= h(\bv{X}) - h(\bv{X}|\bv{Y}) \\
	&= h(H\bv{Y}+\bv{W}) - h(\bv{W}) \\
	&= \frac{1}{2} \log \frac{|H \Sigma_{\bv{Y}} H^T + \Sigma_{\bv{W}}|}{|\Sigma_{\bv{W}}|} \\
	&= \frac{1}{2} \log  |\Sigma_{\bv{W}}^{-1/2} H \Sigma_{\bv{Y}} H^T \Sigma_{\bv{W}}^{-1/2} + I|.
\end{align}
The \acrshort{svd} of the \acrshort{snr} matrix defined by $ \Sigma_{snr} \triangleq \Sigma_{\bv{W}}^{-1/2} H \Sigma_{\bv{Y}} H^T \Sigma_{\bv{W}}^{-1/2} $, is denoted by $ O \Gamma O^T $, where $ \Gamma = \diag \{\gamma_i\} $. Thus,
\begin{equation}
	I(\rv{X}^n;\rv{Y}^n) = I(\bv{X};\bv{Y}) = \frac{1}{2}  \sum_{i=1}^n  \log (1+\gamma_i).
\end{equation}
Applying Szeg\"o theorem \cite{Gray2006}, we obtain that the mutual-information rate is given by
\begin{align}
	\mathcal{I}(\{\rv{X}_t\};\{\rv{Y}_t\}) 
	&= \lim_{ n \rightarrow \infty } \frac{1}{n} I(\rv{X}^n;\rv{Y}^n) \\
	&= \lim_{ n \rightarrow \infty } \frac{1}{n} \frac{1}{2}  \sum_{i=1}^n  \log (1+\gamma_i) \\
	&= \frac{1}{2} \int_{-1/2}^{1/2} \log (1+\Gamma(f)) \mathrm{d} f.
\end{align}
\end{proof}
	
	\subsection{Proof of \acrshort{ib} for Gaussian Process \autoref{theorem:ib_gaussian_process}}
%
	Combining \autoref{lemma:ib_jointly_gaussian_vectors}, with \eqref{eq:ib_rate_definion} and Szeg\"o theorem \cite{Gray2006}, we have
	\begin{align}
		\mathcal{R}^{ib}(C) &= R_n^{ib}(C) \\
		&= \lim_{ n \rightarrow \infty } \frac{1}{2n}  \sum_{i=1}^n 
		\log
		\left[
		\frac{1+\gamma_i}{1+\theta}
		\right]^+ \\
		&= \frac{1}{2} \int_{-1/2}^{1/2}  \log \left[ \frac{1+\Gamma(f)}{1+\theta}\right]^+ \mathrm{d} f,
	\end{align}
	The reverse \textit{water} level  $ \theta $ is chosen such that the total rate is $ C $,
	\begin{equation}
		C = \lim_{ n \rightarrow \infty } \frac{1}{2n} \sum_{i=1}^n  \log\left[ \frac{ \gamma_i }{\theta}\right]^+  = \frac{1}{2} \int_{-1/2}^{1/2}  \log \left[ \frac{\Gamma(f)}{\theta} \right]^+ \mathrm{d} f,
	\end{equation}
	where $ \Gamma(f) $ is the \acrshort{snr} spectrum.
	
	
	\subsection{Proof of Gaussian Process Prediction from \autoref{theorem:prediction_gaussian_ib}}
Let $ g(\rv{V}_{i-L}^{i-1}) $ be the optimal \acrshort{mmse} predictor of $ \rv{U}_i $ based on $ \rv{V}_{i-L}^{i-1} $, then
\begin{align}
	&I(\{\rv{U}_t \} ;\rv{V}_i | \rv{V}_{i-L}^{i-1})  \\
	&= I(\{\rv{U}_t \},\rv{U}_i ;\rv{V}_i | \rv{V}_{i-L}^{i-1}) \\
	&= I(\{\rv{U}_t \},\rv{U}_i- g(\rv{V}_{i-L}^{i-1}) ;\rv{V}_i -g(\rv{V}_{i-L}^{i-1})| \rv{V}_{i-L}^{i-1}) \\
	&= I(\{\rv{U}_t \},\rv{U}_i- \hat{\rv{U}}_i^{(L)} ;\rv{V}_i - \hat{\rv{U}}_i^{(L)}| \rv{V}_{i-L}^{i-1}) \\
	&= I(\{\rv{U}_t \},\rv{W}_i^{(L)} ;\rv{Q}_i^{(L)}| \rv{V}_{i-L}^{i-1}) \\
	&= I(\{\rv{U}_t \},\rv{W}_i^{(L)} ;\rv{W}_i^{(L)} + \rv{N}_i| \rv{V}_{i-L}^{i-1}) \\
	&= I(\rv{W}_i^{(L)} ;\rv{W}_i^{(L)} + \rv{N}_i| \rv{V}_{i-L}^{i-1}) \\
	&= I(\rv{W}_i^{(L)} ;\rv{W}_i^{(L)} + \rv{N}_i) \\
	&= \frac{1}{2} \log \left(1+ \frac{\sigma_L^2}{\theta}\right).
\end{align}
Taking the limit as $ L \rightarrow \infty $ we obtain:
\begin{align}
	I(\{\rv{U}_t \} ;\rv{V}_i | \rv{V}_{i}^{-})  
	&= \lim_{L \rightarrow \infty } I(\{\rv{U}_t \} ;\rv{V}_i | \rv{V}_{i-L}^{i-1}) \\
	&= \frac{1}{2} \log \left(1+ \frac{\sigma_\infty^2}{\theta}\right) \\
	&= I(\rv{W}_n;\rv{W}_n + \rv{N}_n).
\end{align}
Further note
\begin{align}
	\bar{I}(\{\rv{U}_t\}, \{\rv{V}_t\})
	&= \lim_{n \rightarrow \infty} \frac{1}{n} I(\rv{U}^n;\rv{V}^{n})  \\
	&= \lim_{n \rightarrow \infty} \frac{1}{n} \sum_{i=1}^{n} I(\rv{U}^n;\rv{V}_i|\rv{V}^{i-1})  \\
	&= I(\{\rv{U}_t \} ;\rv{V}_i | \rv{V}_{i}^{-})  \\
	&=  I(\rv{W}_n;\rv{W}_n + \rv{N}_n) \\
	&= \frac{1}{2} \log \left(1+ \frac{\sigma_\infty^2}{\theta}\right) .
\end{align}
Note that by \autoref{lemma:mutual_information_rate_gaussian_processes} and \eqref{eq:ib_rate_linear_awgn_form}, we have
\begin{align}
	&\mathcal{C}(\theta) = \bar{I}(\{\rv{Y}_t\}; \{ \rv{Z}_t \}) \\
	&= \frac{1}{2} \mkern-5mu \int \mkern-5mu \log \mkern-5mu \left[\frac{\substack{|H_2(f)|^2 |H_1(f)|^2 |G(f)|^2 (1+\Gamma(f)) \\+ |H_2(f)|^2  S_{\rv{N}}(f)   } }{|H_2(f)|^2  S_{\rv{N}}(f)  }\right] \mathrm{d} f \\
	&= \frac{1}{2} \int \log \left(1 + \frac{
		\left(1 - \frac{D_{\theta}(f)}{\Gamma(f)}\right) \frac{\Gamma(f)}{1+ \Gamma(f)} (1+\Gamma(f)) }{\theta  }\right) \mathrm{d} f \\
	&= \frac{1}{2} \int \log \left( \frac{\Gamma(f)}{ D_{\theta}(f)  }\right) \mathrm{d} f.
\end{align}
Furthermore, since $ \rv{U}_n $ is a function of $ \rv{Y}_n $ and since the postfilter $ H_2 $ is invertible , we have
\begin{equation}
	\mathcal{C}(\theta) =\bar{I}(\{\rv{Y}_t\}; \{ \rv{Z}_t \}) \ = \bar{I}(\{\rv{U}_n\}, \{\rv{V}_n\}).
\end{equation}

Similarly,
\begin{align}
	&\mathcal{R}^{ib}(\theta) = \bar{I}(\{\rv{X}_t\}; \{ \rv{Z}_t \}) \\
	&= \frac{1}{2} \int \log \left( \frac{\substack{|H_2(f)|^2 |H_1(f)|^2 |G(f)|^2 (1+\Gamma(f)) \\+ |H_2(f)|^2  S_{\rv{N}}(f)    }}{\substack{|H_2(f)|^2 |H_1(f)|^2 |G(f)|^2  \\+ |H_2(f)|^2  S_{\rv{N}}(f)  }}\right) \mathrm{d} f \\
	&= \frac{1}{2} \int \log \left( \frac{ \left(1- \frac{D_{\theta}(f)}{\Gamma(f) }\right) \Gamma(f) +  \theta    }{\left(1- \frac{D_{\theta}(f)}{\Gamma(f) }\right)\frac{\Gamma(f) }{1+\Gamma(f)} + \theta  }\right) \mathrm{d} f \\
	&= \frac{1}{2} \int \log \left( \frac{ 1+\Gamma(f)      }{1 + D_{\theta}(f)  }\right) \mathrm{d} f .
\end{align}
This completes the proof of \autoref{theorem:prediction_gaussian_ib}.
	
	\subsection{Proof of \acrshort{pf} for Gaussian Process \autoref{lemma:pf_vector_gaussian}}
	Suppose that $\bv{Y}$ and $\bv{Z}$ are jointly Gaussian vectors with covariance matrix $\Sigma_{\bv{Z}\bv{Y}}$, then there exists $\bv{V} \sim \mathcal{N}(0,\Sigma_{\bv{V}})$ with $\Sigma_{\bv{V}} = \Sigma_{\bv{Z}} - \Sigma_{\bv{Z}\bv{Y}} \Sigma_{\bv{Y}}^{-1} \Sigma_{\bv{Z}\bv{Y}}^T $ such that
	$\bv{Z} = \Sigma_{\bv{Z}\bv{Y}} \Sigma_{\bv{Y}}^{-1} \bv{Y} + \bv{V}$.
Let $U_2^T \Gamma V_2$ be the \acrshort{svd}
of $\Sigma_{\bv{Z}} ^{-1/2} \Sigma_{\bv{Z} \bv{Y}} \Sigma_{\bv{Y}} ^{-1/2} $, where $U_2$ and $V_2$ are two orthogonal matrices and $\Gamma$ is a diagonal matrix with singular values on the diagonal.

We further define the following transformations $\tilde{\bv{Z}} = \tilde{T}_z \bv{Z}$ and $\tilde{\bv{Y}} = \tilde{T}_y \bv{Y}$, where $\tilde{T}_z = U_2 \Sigma_{\bv{Z}} ^{-1/2}$ and $\tilde{T}_y = V_2 \Sigma_{\bv{Y}} ^{-1/2}$. Note that
\begin{align}
    \Sigma_{\tilde{\bv{Z}}} &= \tilde{T}_z \Sigma_{\bv{Z}} \tilde{T}_z^T = I_{n_z}, \\
    \Sigma_{\tilde{\bv{Y}}} &= \tilde{T}_y \Sigma_{\bv{Y}} \tilde{T}_y^T = I_{n_y}, \\ 
    \Sigma_{\tilde{\bv{Z}} \tilde{\bv{Y}}} &= \tilde{T}_z \Sigma_{\bv{Z} \bv{Y}} \tilde{T}_y^T = U_2 \Sigma_{\bv{Z}} ^{-1/2} \Sigma_{\bv{Z} \bv{Y}} \Sigma_{\bv{Y}} ^{-1/2} V_2^T = \Gamma.
\end{align}

We are interested in the PF optimization problem from \eqref{eq:normalized_pf_function},
which is a minimization of convex function over the complement of an open convex set, therefore the minimum is obtained on the boundary of the set.

Since $\bv{Y}$ and $\bv{X}$ are jointly Gaussian, there exists $\bv{W} \sim \mathcal{N}(0,\Sigma_{\bv{Y}} - \Sigma_{\bv{Y}\bv{X}} \Sigma_{\bv{X}}^{-1} \Sigma_{\bv{Y}\bv{X}}^T)$ such that
    $\bv{Y} = \Sigma_{\bv{Y}\bv{X}} \Sigma_{\bv{X}}^{-1} \bv{X} + \bv{W}$.
Furthermore, considering the singular value decomposition of
$\Sigma_{\bv{Y}}^{-1/2}\Sigma_{\bv{Y}\bv{X}} \Sigma_{\bv{X}}^{-1/2} = U_1^T \Lambda V_1$,
the rate constraint obtains the following form:
\begin{align}
    I(\bv{Y};\bv{X}) 
    &= h(\bv{Y}) - h(\bv{W}) \\
    &= \frac{1}{2} \log \frac{\det \Sigma_{\bv{Y}}}{\det (\Sigma_{\bv{Y}} - \Sigma_{\bv{Y}\bv{X}} \Sigma_{\bv{X}}^{-1} \Sigma_{\bv{Y}\bv{X}}^T) } \\
    &=- \frac{1}{2} \log \det (I - \Lambda^2) \\
    &=-\sum_{i=1}^n \frac{1}{2} \log(1-\lambda_i^2),
\end{align}
where we identify $C_{1i} \triangleq - \frac{1}{2} \log(1-\lambda_i^2) $.
Next, consider the objective function. Note that
\begin{equation}
    \bv{Z} = \Sigma_{\bv{Z}\bv{Y}} \Sigma_{\bv{Y}}^{-1} \Sigma_{\bv{Y}\bv{X}} \Sigma_{\bv{X}}^{-1} \bv{X} + \Sigma_{\bv{Z}\bv{Y}}  \Sigma_{\bv{Y}}^{-1} \bv{W} + \bv{V},
\end{equation}
and so,
\begin{align}
    &I(\bv{Z};\bv{X}) 
    = h(\bv{Z}) - h( \Sigma_{\bv{Z}\bv{Y}} \Sigma_{\bv{Y}}^{-1} \bv{W} + \bv{V}) \\
    &=\frac{1}{2} \log \frac{\det \Sigma_{\bv{Z}}}{\det( 
    \Sigma_{\bv{Z}} -
    \Sigma_{\bv{Z}\bv{Y}}  \Sigma_{\bv{Y}}^{-1}  \Sigma_{\bv{Y}\bv{X}} \Sigma_{\bv{X}}^{-1} \Sigma_{\bv{Y}\bv{X}}^T
     \Sigma_{\bv{Y}}^{-1} \Sigma_{\bv{Z}\bv{Y}}^T
    )}\\
    &= - \frac{1}{2} \log \det( I
     - U_2^T \Gamma V_2 U_1^T \Lambda V_1 V_1^T \Lambda U_1 V_2^T \Gamma U_2^T
    ) \\
    &= - \frac{1}{2} \log \det( I
     - V_2^T \Gamma^2 V_2 U_1^T \Lambda^2 U_1 
    ).
\end{align}
This completes the proof of \autoref{lemma:pf_vector_gaussian}.
	
	\subsection{Proof of Robust Information Bottleneck for Gaussian Processes from \autoref{theorem:comib_discrete_gaussian_RP}}
	
Note that for some $ \pmf{\rv{X}^n\rv{Y}^n} $ that satisfies $ I(\rv{X}^n;\rv{Y}^n) \geq C_1 $, it holds that
\begin{equation}
	R_n^{comib} (C_1,C_2) \leq R_n^{ib}(C_2;\pmf{\rv{X}^n\rv{Y}^n}).
\end{equation}
Thus,
\begin{align}
	\mathcal{R}^{comib}(C_1,C_2) 
	&= \lim_{n \rightarrow \infty} R_n^{comib} (C_1,C_2) \\
	&\leq \lim_{n \rightarrow \infty} R_n^{ib}(C_2;\pmf{\rv{X}^n\rv{Y}^n}) \\
	&= \mathcal{R}^{ib}(C_2; \Gamma(f)). 
\end{align}

First direction. Assume the \acrshort{snr} spectrum is a constant, i.e., $ \Gamma(f) = \psi  $. Since $ \bar{I}(\{\rv{X}_n\}, \{\rv{Y}_n\}) = C_1 $, $ \psi $ is the solution to
\begin{align}
    C_1 &= \frac{1}{2} \int_{-1/2}^{1/2} \log \left(\frac{|H(f)|^2 S_{\rv{X}}(f) + S_{\rv{W}}(f)}{S_{\rv{W}}(f)}\right) \mathrm{d}f \\
    &= \frac{1}{2} \int_{-1/2}^{1/2} \log \left(1 + \Gamma(f)\right) \mathrm{d}f \\
    &= \frac{1}{2} \log(1+\psi),
\end{align}
thus, $ \psi = 2^{2 C_1}-1 $.  Furthermore, 
\begin{align}
	C_2 &= \frac{1}{2} \int_{-1/2}^{1/2} \log \left(\frac{\psi}{\theta}\right)\mathrm{d}f \\
	&=  \frac{1}{2} \log\frac{\psi}{\theta},
\end{align}
then,
\begin{equation}
	\theta = \psi \cdot 2^{-2C_2} = ( 2^{2C_1}-1) 2^{-2C_2}.
\end{equation}
This implies, together with \autoref{theorem:ib_gaussian_process}, that the \acrshort{comib} rate is upper bounded by
\begin{align} 
    R^{comib} (C_1,C_2) 
    &\leq  \frac{1}{2}\log \left(\frac{2^{C_1}}{1+( 2^{2C_1}-1) 2^{-2C_2}}\right) \\
    &=  -\frac{1}{2} \log [1-(1-2^{-2C_1})(1-2^{-2C_2})]. \label{eq:comib_rate_upper_bound}
\end{align}

Similarly, for some $ \pmf{\rv{Y}^n\rv{Z}^n} $,  consider the normalized \acrshort{pf} function from \eqref{eq:normalized_pf_function}.
It holds that, for some $ \pmf{\rv{Y}^n\rv{Z}^n} $ that satisfies $ I(\rv{Y}^n;\rv{Z}^n) \leq C_2 $
\begin{equation}
	R_n^{comib} (C_1,C_2) \geq R_n^{pf}(C_1;\pmf{\rv{X}^n\rv{Y}^n}).
\end{equation}
For some fixed $ n $, denote $ \bv{X} \triangleq \rv{X}^n $, $ \bv{Y} \triangleq \rv{Y}^n $ and $ \bv{Z} \triangleq \rv{Z}^n $.
For the opposite direction, i.e., assuming the channel from $\rv{Y}$ to $\rv{Z}$ is \acrshort{awgn}, we utilize \autoref{lemma:pf_vector_gaussian} and establish that the \acrshort{snr} spectrum of $(\bv{X},\bv{Y})$ is also white.

Suppose that the channel from $\rv{Y}_n $ to $\rv{Z}_n $ is an additive Gaussian noise channel with white \acrshort{snr} spectrum, i.e. 
\begin{equation}
    \rv{Z}_n = g_{n} *\rv{Y}_n +  \rv{V}_n,
\end{equation}
where 
\begin{align}
	S_{\rv{Z}}(f)
	&= |G(f)|^2 S_{\rv{Y}}(f) + S_{\rv{V}}(f) \\
	&= \frac{|G(f)|^2 S_{\rv{Y}}(f) + S_{\rv{V}}(f)}{S_{\rv{V}}(f)} \cdot S_{\rv{V}}(f).
\end{align}
We denote the \acrshort{snr} spectrum of the channel $ \rv{Y}_n \rightarrow \rv{Z}_n $ by
\begin{equation}
	\Lambda(f) \triangleq \frac{|G(f)|^2 S_{\rv{Y}}(f) + S_{\rv{V}}(f)}{S_{\rv{V}}(f)}.
\end{equation}
In particular, we assume here that $ \Lambda(f) $ is white, therefore, there exists a constant $ \lambda $ such that $ \Lambda(f) = \lambda $. Applying \autoref{lemma:mutual_information_rate_gaussian_processes}, $ \lambda $ is the solution to the following equation:
\begin{align}
	C_2 
	&= I(\{\rv{Y}_t\}; \{\rv{Z}_t\}) \\
	&=  \frac{1}{2} \log (1+\lambda),
\end{align}
implying  $ \lambda = 2^{2C_2} -1$. 
%
Due to \autoref{lemma:equivalence_of_measures} we have 
\begin{equation}
	\psi_i^2 = \frac{\lambda}{1+\lambda} = \psi^2,
\end{equation}
thus, $ \Psi = \psi \cdot I $,
and \eqref{eq:pf_rate_general_gaussian} becomes
\begin{equation}
	\begin{aligned}
		&R_n^{PF}(C_1) =
		&& \underset{\{\phi_{i}\}}{\text{minimize}}
		&& -\frac{1}{2n} \sum_{i=1}^n \log (1-\psi^2 \phi_{i}^2) \\
		&&& \text{subject to}
		&& -\frac{1}{2n} \sum_{i=1}^n  \log (1-\phi_{i}^2) \geq C_1.
	\end{aligned}
\end{equation}

Denoting $ \zeta_i \triangleq \phi_i^2 $, we obtain the respective Lagrangian
\begin{multline}
	L(\{\zeta_{i}\}, \mu)  \\
	= -\frac{1}{2n} \sum_{i=1}^n  \left[ \log (1-\psi^2 \zeta_{i}) - \mu \log (1-\zeta_{i}) \right] + \mu C_1.
\end{multline}
The KKT conditions are given by:
\begin{itemize}
	\item Stationarity:
	\begin{equation}
		\frac{\partial L}{\partial \zeta_i} = \frac{1}{2n} \left[  \frac{ \psi^2}{1-\psi^2 \zeta_{i}} - \frac{\mu}{1-\zeta_i } \right] = 0,
	\end{equation}
	which implies
	\begin{equation}
		\zeta_i^* = \frac{\psi^2 + \mu }{\psi^2(1+\mu)};
	\end{equation}
	\item Complementary Slackness:
	\begin{equation}
		\mu \left[C_1- \frac{1}{2n}  \sum_{i=1}^n\log (1-\zeta_{i}^*) \right]= 0.
	\end{equation}
\end{itemize}
Thus $ \zeta_{i}^* = \zeta $  for every $ i \in [n] $. In particular,
\begin{equation}
	C_1 = - \frac{1}{2n}  \sum_{i=1}^n \log (1-\zeta) \rightarrow \zeta = \phi^2 = 1-2^{-2C_1}.
\end{equation}
Thus,
\begin{multline}
	R_n^{comib} (C_1,C_2) \geq R_n^{pf}(C_1) \\
	 = -\frac{1}{2} \log [1-(1-2^{-2C_1})(1-2^{-2C_1})],
\end{multline}
 and 
\begin{equation} \label{eq:comib_rate_lower_bound}
	\mathcal{R}^{comib}(C_1,C_2) \geq -\frac{1}{2} \log [1-(1-2^{-2C_1})(1-2^{-2C_2})].
\end{equation}
The proof of \autoref{theorem:comib_discrete_gaussian_RP} then follows from \eqref{eq:comib_rate_lower_bound}, \eqref{eq:comib_rate_upper_bound} and the saddle point property from \autoref{lemma:optimality_saddle_point}.

Thus,
\begin{equation} 
	\mathcal{R}^{comib}(C_1,C_2) = -\frac{1}{2} \log [1-(1-2^{-2C_1})(1-2^{-2C_2})].
\end{equation}
This rate can be achieved by choosing $ \Gamma(f) = \gamma = 2^{2C_1}-1 $ and $ \Lambda(f) = \lambda = 2^{2C_2}-1 $.

Since
\begin{equation}
	\rv{Y}_n = h_n * \rv{X}_n + \rv{W}_n, 
\end{equation}
then,
\begin{equation}
	H(f) = \frac{S_{\rv{Y}\rv{X}}(f)}{S_{\rv{X}}(f)}, \quad S_{\rv{W}}(f) = S_{\rv{Y}}(f) - \frac{|S_{\rv{Y}\rv{X}}(f)|^2}{S_{\rv{X}}(f)}.
\end{equation}
Thus
\begin{equation}
	\Gamma(f) = \frac{|S_{\rv{Y}\rv{X}}(f)|^2}{S_{\rv{X}}(f)S_{\rv{Y}}(f) - |S_{\rv{Y}\rv{X}}(f)|^2} = \gamma,
\end{equation}
implies
\begin{equation}
	|H(f)|^2 = \frac{\gamma}{1+\gamma} \cdot \frac{S_{\rv{Y}}(f) }{S_{\rv{X}}(f)}, \quad S_{\rv{W}}(f) = \frac{1}{1+\gamma} S_{\rv{Y}}(f) .
\end{equation}
As for 
\begin{equation}
	\rv{Z}_n = g_n * \rv{Y}_n + \rv{V}_n
\end{equation}
we may choose $ S_{\rv{Z}} ( f) = S_{\rv{Y}}(f) $, and
\begin{equation}
	|G(f)|^2 = \frac{\lambda}{1+\lambda}, \quad S_{\rv{V}}(f) = \frac{1}{1+\lambda} \cdot S_\rv{Y}(f).
\end{equation}
This completes the proof of \autoref{theorem:comib_discrete_gaussian_RP}.
}

\newpage
\bibliographystyle{IEEEtran}
	\bibliography{prediction}

\end{document}